\newcommand{\commentOut}[1]{}
\newcommand{\anon}[1]{#1}
\newcommand{\protname}{MERGE}
\newcommand{\BBInclusionCheck}{BB Inclusion Check}
\newcommand{\StickerBBUpload}{Sticker BB Upload}
\newcommand{\nodelaynostuff}{Electronic-only}
\newcommand{\Eonly}{E-only}
\newcommand{\nodelay}{Electronic\&Stuff}
\newcommand{\nostuff}{Electronic\&Drop}
\newcommand{\EandD}{E\&D}
\newcommand{\remoteVotingCenter}{Remote Voting Center}
\newcommand{\voteCollectionCenter}{Local Counting Center}
\newcommand{\localVotingCenter}{Local Counting Center}
\newcommand{\remoteCommittedCVR}{Remote committed CVR}
\newcommand{\BB}{BB \hookleftarrow}
\newcommand{\enc}{\mathit{E}}
\newcommand{\mix}{\mathit{Mix}}
\newcommand{\dec}{\mathit{D}}
\newcommand{\hash}{\mathit{H}}
\newcommand{\sig}{\mathit{sig}}
\newcommand{\adv}{\mathcal{A}}
\newcommand{\auth}{\mathit{auth}}
\newcommand{\numvoters}{n}
\newcommand{\pos}{\textsc{pos}}
\newcommand{\Type}{\textsc{Type}}
\newcommand{\ValidZKP}[1]{\textit{ZKP}_{\textsc{Valid}}(#1)}
\newcommand{\DecryptZKP}[2]{\textit{ZKP}_{\textsc{Dec}}(#1,#2)}
\newcommand{\MixZKP}[2]{\textit{ZKP}_{\textsc{Mix}}(#1,#2)}
\newcommand{\commentOutForPublication}[1]{}
\newtheorem{remark}{Remark}
\newaliascnt{coro}{theorem}
\newtheorem{coro}[coro]{Corollary}
\newaliascnt{definition}{theorem}
\newtheorem{definition}[definition]{Definition}
\newaliascnt{prop}{theorem}
\newtheorem{prop}[prop]{Proposition}
\newaliascnt{assumption}{theorem}
\newtheorem{assumption}[assumption]{Assumption}
\newaliascnt{lemma}{theorem}
\definecolor{link-blue}{RGB}{6,69,173}
\definecolor{dark-green}{RGB}{52,133,62}
\definecolor{light-blue}{RGB}{127,180,240}
\definecolor{dark-blue}{RGB}{72,120,224}
\definecolor{heading-grey}{RGB}{128,128,128}
\definecolor{heading2-grey}{RGB}{200,200,200}
\definecolor{Critical}{RGB}{192,0,0}
\definecolor{High}{RGB}{255,0,0}
\definecolor{Medium}{RGB}{255,192,0}
\definecolor{Low}{RGB}{255,255,0}
\definecolor{Informational}{RGB}{94,185,255}
\definecolor{color1}{RGB}{0,0,90} 
\definecolor{color2}{RGB}{220,220,255} 
\definecolor{darkturquoise}{HTML}{0097B2}
\definecolor{highlighttextcolor}{RGB}{  17, 120, 100 }
\definecolor{main}{RGB}{0,0,90} 
\definecolor{sub}{RGB}{220,220,255} 
\newtcolorbox{boxD}[1]{
    enhanced,
    colback = white,
    colframe = main,
    drop shadow,
    title=#1
}
\newtcolorbox{boxDwhite}[1]{
    enhanced,
    colback = white,
    colframe = main,
    drop shadow,
    title=#1,
    float,
    floatplacement=!t
}
\newtcolorbox{boxDwhitePlaceHere}[1]{
    enhanced,
    colback = white,
    colframe = main,
    title=#1,
    float,
    floatplacement=!h
}
\newtcolorbox{boxHwhite}[1]{
    enhanced,
    colback = white,
    colframe = main,
    drop shadow,
    title=#1,
    float,
    floatplacement=!h
}
\title{MERGE: Matching Electronic Results with Genuine Evidence \\
\Large{for verifiable voting in person at remote locations}}
\author{\anon{Ben Adida, John Caron, Arash Mirzaei, Vanessa Teague\thanks{Authors are listed alphabetically.} \\
\texttt{\normalsize  ben@adida.net, jcaron1129@gmail.com \{arash.mirzaei, vanessa.teague\}@anu.edu.au} \\
CAC-Vote project\thanks{
This is a protocol proposal
 for a research project. Like any security protocol, it requires
substantial supporting assumptions and processes in order to be secure -
 there is no claim that any US jurisdiction has, or will have, the
necessary supporting legislation or processes to run it as specified.
    \indent This research was, in part, funded by the U.S. Government. The views and conclusions contained in this document are those of the authors and should not be interpreted as representing the official policies, either expressed or implied, of the U.S. Government.  Distribution Statement A - Approved for public release: distribution is unlimited.
}}}
\begin{document}
	\maketitle

\subsection*{Abstract}
Overseas military personnel often face significant challenges in participating in elections due to the slow pace of traditional mail systems, which can result in ballots missing crucial deadlines. While internet-based voting offers a faster alternative, it introduces serious risks to the integrity and privacy of the voting process. We introduce the \protname{} protocol to address these issues by combining the speed of electronic ballot delivery with the reliability of paper returns. This protocol allows voters to submit an electronic record of their vote quickly while simultaneously mailing a paper ballot for verification. The electronic record can be used for preliminary results, but the paper ballot is used in a Risk Limiting Audit (RLA) if received in time, ensuring the integrity of the election. This approach extends the time window for ballot arrival without undermining the security and accuracy of the vote count.

\section{Introduction}

    Slow mail is one of the main barriers to electoral participation for many overseas military personnel.
    Some jurisdictions send blank ballots by paper mail and require ballot return by the same channel---this
    can be very slow, often resulting in ballots that are not returned by the deadline. Other jurisdictions allow
    for paperless voting by Internet, including email or pdf upload---this is very fast, but introduces unacceptable
    risks to integrity and privacy. One compromise is electronic delivery and paper returns, in which voters
    download a blank ballot and mail back a printed paper vote, having filled it in either manually or electronically.
    This has similar integrity and privacy properties to
    vote-by-mail, and takes only half the time that a fully paper solution would take. Nevertheless, the return mail
    delivery is often still too slow.

    This paper describes the \protname{} protocol, which enhances the basic scheme of electronic delivery and paper returns     in a way that increases speed without significantly compromising privacy or integrity. The main idea is to send     an untrusted electronic record back quickly, which can be confirmed if the paper ballot arrives in time for the Risk Limiting Audit (RLA),
    or distrusted otherwise. The RLA is a post-election auditing process that uses statistical methods to ensure with high confidence that the reported election outcome is correct, by manually comparing a random sample of paper ballots against the electronic vote tallies. Since RLA  normally happens a few days to a few weeks after the election day, this extends the time available for the paper ballot to arrive.

    Assume a public bulletin board (BB) which is an un-erasable,  authenticated broadcast channel with memory. We will extensively use
    the assumption that people in different locations can rely on seeing the same data on the BB.
    \begin{itemize}
        \item In the polling place, the voter makes both a verifiable paper record, which is placed inside an envelope and mailed, and an untrusted
        electronic record, which is encrypted and posted to the BB.
        \item The voter is issued a digital signature, presented on a sticker, which serves as verifiable evidence of participation. Envelopes are considered acceptable only if they bear stickers with valid signatures upon receipt. This also allows officials at the \voteCollectionCenter\ to notify the voter when their properly-signed envelope has arrived.\footnote{This is expected to be part of the practical implementation. It does not, however, form an important part of the
    cryptographic protocol because it does not include evidence that the paper ballot matched the electronic
    one. The protocol could be extended, with some extra mixing, to notify each voter of whether their specific ballot matched. The current
    version simply produces a collective tally of how many of the received ballots matched.}
        \item The untrusted electronic record can be incorporated into preliminary results and used as the cast vote
        record in an RLA.
        \item If the paper ballot arrives in time for the RLA, it is used as the ballot, just like any other RLA.
        \item If no paper ballot arrives that corresponds to a given electronic record, we use the
        phantoms-to-zombies approach of Banuelos and Stark~\cite{banuelos2012limiting} to ensure that the risk limit is
        met even though the electronic record is not trusted.
    \end{itemize}

        Two assumptions are unavoidable: first, there must be an accurate count of people who legitimately  participated; second, the paper ballot  the voter places into the envelope must accurately     reflect their intentions. Our protocol also requires someone to check the digital signature on every envelope. These all need to be supported by human verification processes.

        The paper and electronic records could be produced by scanning a hand marked paper ballot, or
    printing a paper ballot from a ballot marking device (BMD).\footnote{Scientific opinion is divided on whether human
    verification of BMDs is sufficiently accurate to justify the assumption that the printout matches the voter's
    intention. See \cite{appel2020ballot} and \cite{kortum2021voter} for examples of opposing views. At a minimum, this
    requires careful design to
    ensure voters are motivated and encouraged to verify, and there is something they can do if a misprint
    occurs.} See \autoref{sec:threatModelAndTrustAssumptions} for a detailed discussion of the trust assumptions.

 Compared to
plain electronic delivery and paper returns, MERGE's main
advantage is that, rather than needing ballot papers to arrive in time to be scanned for preliminary results, a vote can be safely counted  if it arrives before the RLA. Of course, an electronic commitment to the vote must still be made before the normal close of polls. In practice, for most jurisdictions, this
allows a few more weeks for mail to arrive. Its main disadvantage is that---if a
large number of ballot papers do not arrive in time---the consequent large discrepancies may cause an RLA to fall back
to a manual recount, when it would not have if those votes had simply been excluded from the beginning. 

\protname{} achieves \emph{Software Independence}~\cite{rivest2008notion} based on a collective, probabilistic method of cast-as-intended verification designed to fit in to an existing Risk Limiting Audit. We assume
that the attacker can control any electronic components, but at least one of the devices chosen for each verification step is honest. See \autoref{subsec:security-goals}. 

All the cryptographic components  are  implemented using Microsoft's ElectionGuard library, or some other cryptographic
library openly available to the general public.

The election occurs in two classes of locations, each of which might have several instances.

\begin{description}
	\item[\remoteVotingCenter s] are controlled polling places in remote areas such as overseas military bases.
	\item[\localVotingCenter s] are state- or county-based electoral authorities, where the ordinary votes from most citizens are counted. These will also receive paper ballots from \remoteVotingCenter s and process them.
\end{description}

There are likely to be many \remoteVotingCenter s and many \localVotingCenter s, but we assume only one of each in this paper for simplicity.

\subsection{Authentication using CAC cards}
Voters at \remoteVotingCenter s own a smart card called a Common Access Card (CAC) card. It provides a Public Key Infrastructure (PKI)
to enable secure authentication and digital signatures. Each CAC card includes a (private) signing key and a certificate linking its ID to the corresponding public key. Observers and officials at each \voteCollectionCenter\ know the CAC IDs and certificates of the voters to be included. This knowledge is obtained either through individual CAC voters, who must register their public keys with their local electoral authorities, or via a more centrally organized process. Only votes with a valid CAC signature are accepted via the electronic channel.

    Changes to CAC cards during the protocol, such as name changes, eligibility changes
    or key refreshes between the time the person votes and the time the votes are tallied, are out of
    scope for this paper but will need to be handled with specific policies in practice.

\subsection{\protname{} within the voting ecosystem}
We assume that most votes are not from \protname{}, but are cast locally near the \localVotingCenter. The usual audit process is conducted at the \localVotingCenter, with \protname{} ballots incorporated from \remoteVotingCenter{}s as needed.

The ballot manifest---available to observers at the \localVotingCenter---includes both the ordinary local ballots and all the ballots from \remoteVotingCenter s. So do the  preliminary
Cast Vote Records (CVRs).

The RLA proceeds exactly as it would if all the votes had been cast locally: observers see the ballot manifest and preliminary CVRs, then they watch a transparent process for seeding the PRNG that will be used to generate ballot samples, then they check the sequence of sampled ballots.
    For local ballots, CVRs are made in the usual way (e.g. by scanning); for \protname\ ballots, preliminary CVRs come from the BB in encrypted form.
    If the sampled ballot is local, officials retrieve it in the usual way and compute its effect on the audit statistics. If the ballot was cast remotely, it will require some extra work to locate the mailed ballot and compare it with its electronic counterpart. It is also possible that the ballot was delayed in the mail---we deal conservatively with this situation.
    Figure~\ref{fig:overview} shows how \protname{} fits into the voting and auditing process.

    \begin{center}
\begin{figure}
	\includegraphics[scale=0.3]{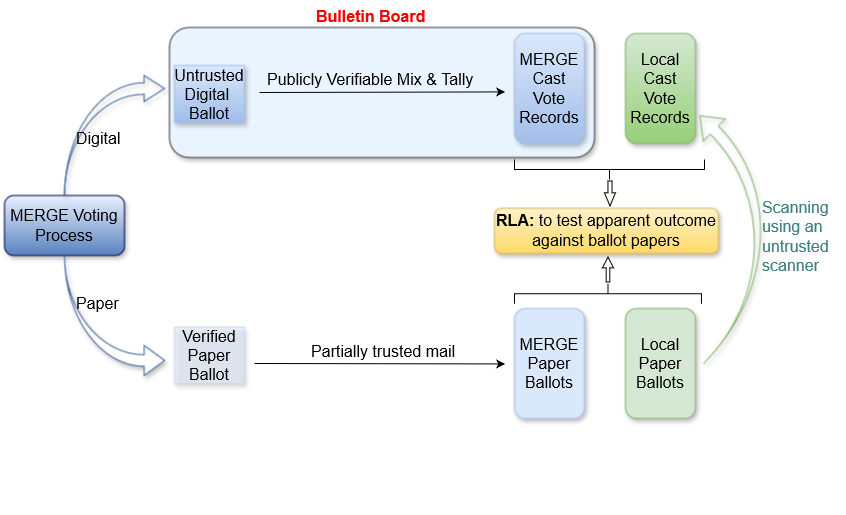}
	\caption{Overview of \protname{} in the voting ecosystem.}
	\label{fig:overview}
\end{figure}
\end{center}

\subsection{Security properties and contributions} \label{subsec:security-goals}
The \protname{} protocol satisfies the following properties:
\begin{itemize}
    \item \textbf{Authentication:} Only eligible voters can vote.
    \item \textbf{Privacy:} The protocol does not reveal more about an individual's vote than the published tallies do.\footnote{There are significant assumptions behind the privacy guarantees of \protname{}, which depend on which version is run. See \autoref{subsec:privacy}.} \protname{} does not achieve everlasting privacy.
    \item \textbf{Receipt Freeness:} It is infeasible for a voter to convince anyone of the value of their vote, even if they actively collude with the coercer, unless the coercer observes the serial numbers during the audit process.\footnote{\protname{} reveals who participated, so it is possible to coerce someone to refrain from participating altogether. See \autoref{subsec:security-receipt-freeness}.}
    \item \textbf{(Individual) paper-based cast-as-intended verification:} Each voter can see that their paper vote accurately reflects their intention and refuse to
    cast it if it does not.\footnote{Our cryptographic protocol is agnostic about whether this paper vote is filled in by hand by the voter or printed by a BMD. It obviously needs to be trustworthy in order for the RLA to be valid.}
    \item \textbf{(Public) tally verification:} anyone can verify that the  recorded electronic votes are correctly tallied.
    \item \textbf{(Collective, probabilistic) recorded-as-cast verification:} if the apparent election outcome (incorporating the electronic tally) does not accurately reflect the paper evidence (including both \protname{} and ordinary ballots), the RLA will fall back to a full manual count except with
    probability at most the risk limit.
    \end{itemize}

    The formalization and proof of the last property is a major contribution of this paper: we employ a game-based approach to demonstrate that \protname{} can interface with an existing RLA procedure, ensuring that the \emph{overall} risk limit is still met. While \cite{fuller2023adaptive} also uses a ``physical cryptography game'' framework to analyze RLAs in adversarial contexts, our work extends this approach to a complete e-voting protocol, addressing the complexities of processing both electronic and paper ballots prior to the RLA process. This property is different from (and neither weaker nor stronger than) end-to-end verifiability: although individuals cannot verify that \emph{their} vote is properly recorded on the BB, the system can instead guarantee a global limit on the risk that enough votes are misrecorded to alter the outcome.
    
    Guarantees achieved in this paper rely entirely on the trustworthiness of the paper trail, which is outside the control of the cryptographic protocol, and may fail for a variety of reasons, such as security problems in the mail channel and difficulties verifying printouts.  These need to be addressed by human procedures outside the \protname{} protocol.

    \subsection{Related work}
    There are many protocols for remote electronic voting, some with very strong security properties of various kinds. However, there is no single protocol that combines good privacy, receipt freeness (even against an attacker who can see only the bulletin board), public verifiability, and voter verification usable enough to constitute a good basis for trust. Since we work in a controlled setting, direct voter verification of a plaintext paper vote has substantial advantages, but the remote setting introduces practical challenges. This is a relatively under-studied area---most designs focus on either an uncontrolled Internet setting, or an in-person setting in the normal polling place.

    Many paperless Internet voting systems (e.g. \cite{adida2008helios}) provide end-to-end verifiability via the ``Benaloh Challenge'' method, in which the voter may choose to cast or challenge their encrypted vote. A challenge extracts a proof that can be checked on another device, and the process can be repeated. In theory this can give the voter a high probabilistic guarantee that the vote they cast matches their intention, but in practice the verification process is complex, which may lead ordinary voters to neglect it or get it wrong. The Estonian Internet voting system allows voters to challenge their real vote directly \cite{heiberg2014verifiable}.

    Remotegrity \cite{zagorski2013remotegrity} is a hybrid mail/internet extension of the Scantegrity II voting system \cite{chaum2008scantegrity}, designed for absentee voters. It is specifically tailored for pre-printed ballot systems, and uses code voting and scratch-off labels to enable voters to verify cast-as-intended and recorded-as-cast. Pretty Good Democracy (PGD) \cite{ryan2013pretty} is another code-sheet-based voting system where voters use codes to send their votes and receive only a single acknowledgment code as receipt. In Belenios-VS~\cite{cortier2019beleniosvs}, voters can receive their preprinted ciphertexts through mail and verify them using an independent device.

    Code-return voting systems, like the Norwegian \cite{gjosteen2012norwegian} and Swiss Internet voting systems, involve voters casting encrypted votes and receiving a confirmation code. They then cross-reference this code with a code sheet received via mail to verify their vote. In addition to individual verification of votes, these systems also enable server-side verification of the tally.

    The integrity of voting outcomes in code-based systems depends on the secrecy of the codes, a factor that voters cannot verify. This  obstructs achieving true end-to-end verifiability. Additionally, the secure printing and private transmission of code sheets pose significant practical challenges.

    Publishing vote ciphertexts that were built on a voter's machine introduces a potential coercion problem, because the voter could reveal the randomness that produced the ciphertext. This is a known issue in many systems including Helios. In Estonia, the authorities attempt to address the problem by allowing revoting, but this undermines verifiability~\cite{pereira2021individual}.
    Civitas \cite{clarkson2008civitas} prioritizes coercion resistance over ensuring that votes are cast as intended, as the voting device  is presumed to be trustworthy. Selene~\cite{ryan2016selene} posts the votes publicly in cleartext alongside private tracking numbers. Voters receive notification of their tracker after the vote/tracker pairs are posted, enabling them to select an alternative tracker if coerced.

    In summary, there are good reasons to take advantage of the simplicity of paper verification and the coercion-resistance of a voting computer that is not under the voter's control.

    ``Italian attacks'' may allow for coercion anyway: if there are many options on one ballot, then the existence of a particular pattern can be evidence of compliance with coercion. Some works therefore hide all information about individual ballots, including tallies~\cite{DBLP:conf/eurosp/KustersL00020, DBLP:conf/ccs/HuberKKL00R022}. \protname{} neither solves nor exacerbates this problem, assuming that mixed individual ballots are shared with auditors and observers---a higher level of protection seems hard to reconcile with an RLA.

    Many systems provide end-to-end verifiability for polling-place voting and combine it with plain paper verification.
    Prêt à Voter~\cite{ryan2009pret} uses preprinted auditable ciphertexts, and has been used in remote, controlled polling places~\cite{culnane2015vvote}. Voters  may audit preprinted blank ballots, and then vote on one that has not been opened. STAR-Vote~\cite{bell2013star} and wombat~\cite{wombat} adopt the ``cast-or-audit'' approach combined with plain paper, as does ElectionGuard~\cite{benaloh-electionguard}.  Scantegrity II~\cite{chaum2008scantegrity} combines a plain paper vote with short return codes for verification. Themis~\cite{DBLP:conf/ccs/BougonCCDDDGT22} combines a plain paper record with an innovative blinding-based verification process. All of the paper-based designs could incorporate an RLA. Most could be adapted to remote, controlled polling places, but none have detailed designs for dealing with ballots that are lost in the mail.

    ElectionGuard's specification~\cite{electionguard} also includes a vote-by-mail design, but this is intended for situations where preprinted ballots can be sent to voters and returned. It does not attempt to solve the timing issues associated with two-way mail.

    Devillez et al's ``Verifiable and Private vote-by-mail'' \cite{devillez2024verifiable} adds cryptographic verifiability to the electronic delivery and paper return strategy, much like \protname{}.  The main difference is that there is no pre-emptive electronic record sent, so no opportunity to take advantage of a timing difference between initial canvas and audit.


\section{Technical Background and Notation} \label{sec:technicalBackground}

\subsection{Cryptographic components}
\paragraph{ElectionGuard}
ElectionGuard~\cite{benaloh-electionguard} is a suite of open-source tools developed by Microsoft to support end-to-end verifiable voting.
    It uses a variant exponential form of the ElGamal cryptosystem to encrypt all selections in a vote. This allows:
       \begin{itemize}
           \item homomorphic addition of encrypted values, enabling the decryption of only the overall election tally,
           \item re-randomization of  an encrypted value, which produces a ciphertext indistinguishable from a fresh one,
           \item threshold distributed decryption, which prevents any set of fewer than a threshold of talliers from colluding to decrypt individual votes.
           \end{itemize}

    \begin{remark}
    In this work, we use the elliptic curve exponential ElGamal encryption with threshold distributed decryption.
\end{remark}

    Our design uses the ElectionGuard toolkit and relies on the paper record rather than trying to gain trust in the electronic record, with some extra details to allow for the sorts of failures that can happen when votes are mailed.

 Homomorphic addition is written like multiplication on ciphertexts, either $\cdot$ for a pair or $\Pi$ for a product of multiple values.
    The encryption of vote vector $\mathbf{b}$ is denoted by $\mathbf{e}=\enc(\mathbf{b})$.

\paragraph{Non-interactive zero-knowledge (NIZK) proofs}
    The election transcript includes Noninteractive Zero Knowlege Proofs (NIZKPs) that allow for public verification that input votes are valid and that all decryptions are computed properly.

   A NIZK proof of validity for an encrypted vote vector $\mathbf{e}=\enc(\mathbf{b})$ is denoted by $\ValidZKP{\mathbf{b}}$. It proves:
\begin{itemize}
    \item the encryption for each option (e.g. ${e}_j$) encodes a valid value, usually either zero or one (i.e. ${b}_j \in \{0,1\}$), and
    \item the sum of all encrypted options within each contest falls within a specific range (determined by the voting rules).
\end{itemize}

We also use ElectionGuard's NIZK proofs to demonstrate that an encrypted value (e.g.  $e=\enc(b)$) is decrypted to a specific value (e.g. $b$), without revealing the decryption key. This proof  is denoted by $\DecryptZKP{b}{e}$.

    Of course, ZKPs cannot prove that the ciphertext accurately represents the voter's intention. For this, ElectionGuard uses a combination of challenge-based ``Benaloh challenge'' verification with plain paper records.

    More details about ElectionGuard are in Appendix~\ref{subsec:election-guard}.

    \paragraph{Private decryption}
Sometimes a plaintext, with a proof of proper decryption, will be supplied to some participants but not published on the BB. We call this \emph{private decryption.}

    \paragraph{Mixnet}  To maintain the confidentiality of each voter's ballot, a mixnet is used to mix it with others. It takes a set of encrypted ballots and outputs a shuffled set of re-encrypted ballots, making it challenging to link any specific encrypted ballot in its output to any of its input encrypted ballots. The mixnet consists of multiple mixing servers arranged in sequence, each operated by an independent mixing authority, ensuring that as long as at least one server is honest, the confidentiality of the ballots is preserved. The mixnet operation on the input encrypted ballot set  $\{\enc(\mathbf{b}_{\mathbf{i}})\}_i$ is denoted by $\mix(\{\enc(\mathbf{b}_{\mathbf{i}})\}_i)$.

    We also use NIZK proofs to demonstrate  that the mixnet input $\mathcal{I}$ (a set of vote vectors) is correctly shuffled and re-encrypted to $\mathcal{O}$, without revealing any information about the permutation from $\mathcal{I}$ to $\mathcal{O}$. This proof is denoted by $\MixZKP{\mathcal{I}}{\mathcal{O}}$

\paragraph{Digital signature} The digital signature on message $m$ by voter $i$ is represented as $\sig_{i}(m)$. Most often we produce both the message and its signature, which we denote by $\auth_{i}(m)$, meaning $\sig_{i}(m)$ concatenated with $m$.

\paragraph{Hash function} $\hash(\cdot)$ denotes the hash of message $m$ using a collision-resistant cryptographic
hash function (e.g., SHA-256).

\paragraph{Serial Number} Each ballot includes a unique serial number $sn$ which is randomly generated. The role of the serial number is to allow one-to-one matching between the paper ballot and its electronic counterpart.

    \subsection{Risk Limiting Audit} \label{subsec:RLAsAndOneAudit}
    A \emph{Risk Limiting Audit} (RLA) tests whether a trustworthy set of paper ballots implies that the announced election winner(s) won, and corrects the result by a full hand count if not. It is parameterized by a \emph{risk limit} $\alpha$ and has the following property:

    \begin{quote}
       If the announced election result is wrong, the RLA will progress to a full hand count with probability at least $1-\alpha.$
        \end{quote}

    RLAs were developed by Stark and others \cite{stark2008conservative,stark2008sharper, stark2010super, lindeman2012bravo, lindeman2012gentle, ottoboni2018risk, stark2020sets} and apply to a wide variety of election types and audit situations, including single- and multi-winner plurality contests, supermajority elections, party-list proportional elections and Instant Runoff Voting.

    The \emph{apparent outcome} is (a set of) announced winner(s).
    The \emph{actual outcome} is the outcome that would be found if all the ballot papers were correctly counted.
    A \emph{ballot manifest} is a catalogue describing which paper ballots are present at which physical location.
    The apparent outcome is supported by a set of \emph{cast vote records} (CVRs), which may or may not accurately reflect the voter's intentions.
    In this work we mostly concentrate on \emph{ballot-level comparison audits} in which each individually sampled ballot paper is compared with its corresponding CVR. (Using \protname{} with other audit styles is discussed in \autoref{sec:otherRLAStyles}.)
   If the CVR overstates a winner's tally compared with the paper record, it is an \emph{overstatement} (for example, if the CVR is a vote for the winner and the paper ballot is a vote for the loser, it is a two-vote overstatement). If the CVR understates a winner's  tally compared with the paper ballot, it is an \emph{understatement}.

    An RLA technique that is particularly useful in our setting is the phantoms-to-zombies approach devised by Ba\~{n}uelos and Stark \cite{banuelos2012limiting}. This is a way of dealing with ballots that appear in the manifest but cannot be found on paper, a problem that may occur frequently when \protname{} ballots are delayed or dropped in the mail.  The idea is to ``Pretend that the audit actually ﬁnds a ballot, an evil zombie ballot that shows whatever would increase the risk value the most.'' Ba\~{n}uelos and Stark prove this to be conservative, in the sense that the RLA property still holds, assuming that it would have held had the correct ballot paper been located. The correct evil zombie ballot might be slightly different depending on the kind of RLA, but is generally a vote for the highest loser, or possibly an imaginary vote for all the losers, or (in the case of separate comparisons between the winner and each loser) a vote for whichever loser is being compared.

There are various RLA approaches, each with unique properties, defined by different mathematical functions for deciding when to accept the outcome. The following definition outlines their general functionalities and risk limit properties.

\begin{definition}

    \textbf{RL Functionality}

    \textbf{Setup parameters:} An apparent outcome, a ballot manifest, a risk limit $\alpha$, and optionally the cast vote records.

    \textbf{Input:} A sample of ballots (their manifest ID and vote contents), optionally indexed by their  cast vote record.

    \textbf{Procedure:} Perform one of the following actions:
        \begin{itemize}
            \item Output ``accept'' and halt, or
            \item Output ``reject'' and halt, or
            \item Take a new sample of ballots, combine it with the previous sample, update risk calculations and repeat.
        \end{itemize}

    \textbf{Risk Limit Property:} For any setup parameters with an incorrect apparent outcome, the probability that the above procedure outputs ``accept'' over uniformly generated\footnote{Some techniques exist for specific kinds of non-uniform sampling, such as stratified sampling~\cite{ottoboni2018risk, spertus2022sweeter}, and in some cases it matters whether the uniform sampling is conducted with or without replacement.} random samples is at most $\alpha$.

\end{definition}

    It is important to understand that applying an RL Functionality does not necessarily imply running a valid Risk Limiting Audit. For example, if the ballot papers are not an accurate representation of the voters' intent, if they have not been securely stored, if the CVRs were not properly committed before the random sample was taken, or the sampled ballots were not honestly randomly generated, the audit may not actually limit risk. The exact definition of a valid risk limiting audit is outside the scope of this paper. We can, however, prove that our RL functionality maintains the risk limit property, which implies that it can be fitted in to an existing Risk Limiting Audit process (if a valid one is already being run).

    \begin{remark}
    \label{RL remark}
    The risk limit property must hold regardless of how the wrong outcome is constructed: the adversary may choose any margin, or any way to distribute the wrong CVRs, but must commit to CVRs before choosing the samples randomly.
    \end{remark}

    \begin{definition}
    \label{def:monotonic}
    An RLA functionality is \emph{monotonic} if
        for any given setup parameters with an incorrect apparent outcome and  any input sample set,
    an increase in the discrepancy\footnote{The difference between the voter-verified paper record and the reported electronic results for a specific ballot} value of any sampled ballot, while other discrepancy values remain fixed,
    does not change the output from 'reject' to 'accept'.
    \end{definition}

    \begin{remark}\label{rem:phantom-remark}
        All RLA functions in general use are monotonic,
        and we will assume monotonicity in this paper.
    \end{remark}

    \section{Threat model and trust assumptions} \label{sec:threatModelAndTrustAssumptions}

    Here, we define our trust assumptions as follows:
    \begin{enumerate}
        \item The local electoral authorities maintain a list of CAC IDs assigned to registered eligible voters within their county.
        \begin{itemize}
            \item Each voter's CAC card securely stores the corresponding private key.
            \item The CAC Certificate Authority issues trustworthy certificates linking each CAC ID to its public key(s).
            \item The local electoral authorities validate the CAC certificate for each signed ballot posted to the BB.\footnote{In principle, the certificates could be posted on the BB with the signatures. However, in practice CAC certificates contain significant personal information that precludes public distribution. We therefore need to assume that they are verified by local authorities but not the public.}
        \end{itemize}
        \item Each voter verifies that the paper ballot accurately reflects their intention.
        \label{assumption:intention}
        \item Each printed signature on the sticker is verified before sending, either by the voter or by some other trustworthy assistant at the \remoteVotingCenter.
        \label{assumption:signver}
        \item The mailing addresses on the envelopes are correct\footnote{This assumption can hold if: the sender knows the intended destinations, there is a supply of preprinted trustworthy envelopes, or an authority verifies the address printed on the envelope by the BMD.}.
        \item The voter instructions have to come from some trustworthy source, e.g. a poster on the wall, and not from the BMD itself. 
\end{enumerate}
    The below additional assumptions are relevant only for privacy.
    \begin{enumerate}[resume]
        \item The voter remains hidden from others within the confines of the voting booth. The privacy adversary's visibility is restricted to observing
        \begin{itemize}
            \item the BB,
            \item other pieces of evidence (remembered or captured by the voter), which might be susceptible to forgery.
        \end{itemize}
        \item Fewer than a pre-determined threshold of tallying authorities are dishonest.
        \item At least one of the mixing authorities is honest.

    \end{enumerate}

    Integrity also depends on an accurate count of the number of voters who participated---the procedures for achieving this are described in \autoref{sec:VoteCount}.

    \subsection{Attacker model}
    \label{sec:attacker model}
    We define two distinct adversarial models. In both models, the adversary may control all computers (except at least one chosen for verification) and do any polynomial-time computation. The variations between the models lie in the extent of control the adversary possesses over the paper channel.
    \begin{itemize}
        \item \emph{\nodelaynostuff} (or \Eonly): The adversary has no control over the paper channel. Every envelope arrives at the Local Counting Center unaltered within a specified timeframe and stuffing new envelopes is impossible.
        \item \emph{\nostuff} (or \EandD): Any envelope, mailed by a legitimate sender, may be selected by the adversary to experience extended delays or even go missing. However, if delivered, the envelope and its contents are unaltered.
    \end{itemize}
    The adversary in these models is unable to read the contents of a mailed envelope. Table \ref{Tab:adv} summarizes these models.

    If ballots are collected by a responsible authority (for example, in a ballot box) and either  transported  to the \localVotingCenter\ under supervision or audited on the spot with the \remoteVotingCenter\ acting as its own \localVotingCenter, this corresponds to the \nodelaynostuff\ model.

    The \nostuff\ model probably corresponds most closely to intuitive assumptions about postal mail voting (whether those assumptions are valid in practice or not). Jurisdictions vary greatly in how they prevent fraudulent ballots from being accepted. Our protocol's use of digital signature stickers makes it harder for an \emph{external} attacker to generate apparently-valid fake votes, but we assume that the attacker controls all the computers, including those to which the voter's CAC card is attached. Consequently, the attacker can manipulate a compromised voting machine with the voter's CAC card to sign any arbitrary message. We therefore do not make it harder for this \emph{internal} attacker to stuff a valid-looking paper ballot into the paper mail than it would have been with traditional postal mail. If the jurisdiction was diligently checking each voter's handwritten signature, or if some other form of registered or secure mail is being used, our protocol does not undermine it, but nor do we add to the defences. So we simply assume that \emph{something} prevents the internal attacker from adding fraudulent mail ballots.

    \begin{table}
        \begin{center}
            \caption{Capabilities of the adversary in  different  models.}
            \label{Tab:adv}

            \begin{tabular}{||l||c|c|c|c||}
                \hline
                \hline

                Adv.&Electronic&Env.&Env.&Content\\
                Model&Control&Stuffing&Drop&View\\

                \hline
                \Eonly\ &$\checkmark$&-&-&-\\
                \EandD&$\checkmark$&-&$\checkmark$&-\\
                \hline\hline
            \end{tabular}

        \end{center}
    \end{table}

    Note that \protname{} does not claim to defend against an attacker who can stuff paper ballots.

    \subsection{Implications}
    Our protocol strives to detect any errors that may occur during the voting process. However, complete recovery from these errors cannot be guaranteed. It is important to acknowledge that if any assumptions fail, the errors or attacks in the given adversarial model may become undetectable or it might be impossible to differentiate one type of failure from another.
    For instance, if the voters do not carefully check that their paper matches their intention, or do not verify that the digital signature they are about to place in the mail is valid (and theirs), then manipulation may be undetectable.

    \section{\protname{} protocol} \label{sec:cup-cake}
The \protname{} protocol is described below.
    The example given here uses a BMD to print a paper record, but the protocol would work just as well with a hand-marked paper ballot interpreted by a scanner. In that case, the ``voting computer'' would be a scanner with the capacity to produce an encrypted record and upload it to the BB.
     The serial numbers could either be generated by the scanner, or printed in advance.
    We will use the term ``voting computer'' or ``voting machine'' to mean either a BMD or a scanner with computation and communication capacity.

\subsection{Voting process for voter $i$}
\label{cup-voting-proc}
 The voting process from the voter's perspective is described below and shown in Fig.~\ref{fig:workflow}.
        
\begin{enumerate}
  \item The voting machine displays the options to voter $i$.

    \item The voter interacts with the machine to make their selection. Let their vote be $\mathbf{b}_i$.
    \item \label{step:make-vote} The voting machine assigns a unique serial number $sn_i$ to this vote. The voting machine produces:
	\begin{itemize}
		\item a signed, encrypted electronic record and serial number, with a proof of ballot validity, for the BB:
		$$\BB \auth_{i}(\enc(sn_i),\enc(\mathbf{b_i}), \ValidZKP{\mathbf{b_i}}),$$

		\item a plain paper record of the ballot $(sn_i,\mathbf{b_i})$, which is placed in an envelope for mailing, and 
		\item a sticker that includes an address, a traditional mail tracking number, a space for a pen-and-ink signature (if required by regulation---not relevant to our cryptographic protocol) and a digital signature
		$$\auth_{\textit{Voting Computer}} (Id_i,Id_e,H_i),$$
  where $Id_i$ is voter $i$'s unique identifier,  $Id_e$ is the election identifier, and $H_i$ is the hash value created from the digital record that is stored for voter $i$, i.e. $H_i=\hash(\auth_{i}(\enc(sn_i),\enc(\mathbf{b_i}), \ValidZKP{\mathbf{b_i}})).$
	\end{itemize}

The voter must verify that her plaintext printout matches her intention, then stick the sticker on her envelope and mail it. Someone also needs to use an electronic device, whether their own or a third party's, to  
\begin{itemize}
    \item verify the digital signature and the signed data on the sticker, and
    \item perform the ``BB Inclusion Check'' or ``Sticker BB Upload'' stated in \autoref{sec:VoteCount}.
\end{itemize}

\end{enumerate}

\begin{figure}
	\begin{center}

 \includegraphics[scale=0.6]{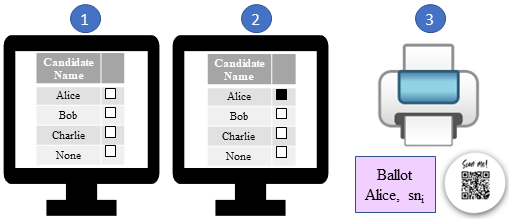}
  	\caption{The voting process.}

	\label{fig:workflow}
 \end{center}
\end{figure}

After the voting period ends, electronic records undergo a series of processing steps. Additionally, paper ballots are sent to the \voteCollectionCenter\ for auditing purposes. A diagram illustrating the whole process is in Figure~\ref{fig:scheme-sn}. Details of all processing steps will be provided in the following sections.

\begin{figure*}
	\begin{center}
 \includegraphics[scale=0.6]{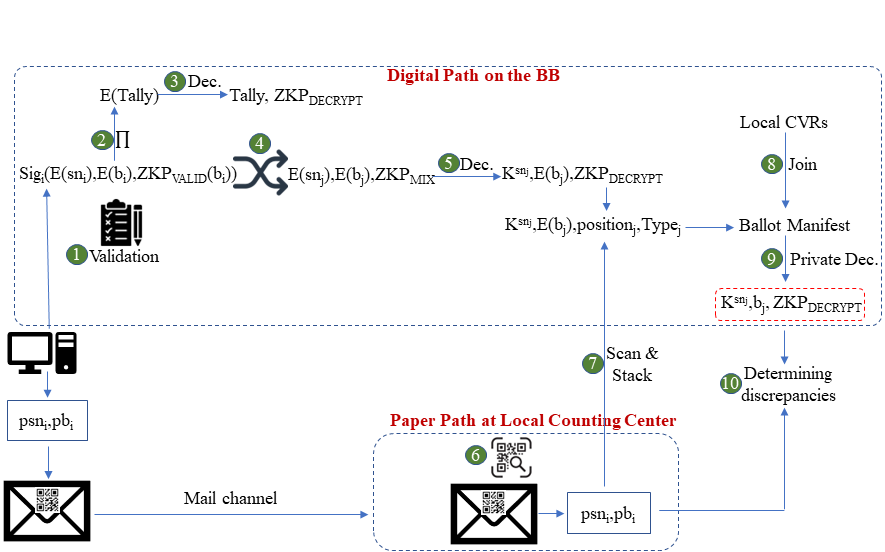}
	\caption{The whole process for \protname{}\ based on ballot comparison. Each \localVotingCenter's votes are dealt with separately---the diagram shows the process for only one \localVotingCenter.}
	\label{fig:scheme-sn}
 \end{center}
\end{figure*}

\subsection{Digital path}
\label{dig-path}
We explain digital processing of electronic ballots for one \localVotingCenter{}'s votes. Every other \localVotingCenter{}'s data is processed similarly---they do not interact because their results need to be delivered to the appropriate \localVotingCenter, even for statewide contests.

At the conclusion of the voting process, each jurisdiction processes the votes on the BB corresponding to CAC IDs registered in that jurisdiction.

The digital record undergoes the following processing steps.
\begin{enumerate}
       \item The validity of the signature and ZKP for each electronic ballot on the BB is verified.

       It is also verified that no two ballots on the bulletin board contain a common ciphertext. If this occurs, it
       indicates serious misbehaviour by a voting computer---the process stops.

       If any two valid ballots come from the same CAC ID, the first one is selected and the rest are withheld from any further processing.\footnote{It does not matter what definition of `first' is applied, as long as only one is chosen, and all $n$ selected for processing by a given jurisdiction are unambiguously marked on the BB.}
       \label{step1-dig}
    \item
       The digital record on the BB now contains the information of all $\numvoters$ voters, where the record with index $i$ denotes the ballot from voter $i$, where $i=1\ldots{}n$.
       $$\BB \mathcal{D}_1:=\auth_{i}(\enc(sn_i),\enc(\mathbf{b_i}), \ValidZKP{\mathbf{b_i}))}:i = 1\ldots{}n$$
    \item The encryption of the total tally is calculated and published on the BB

       $$\BB \enc(\mathbf{Tally}):=\prod_{i=1}^\numvoters{}\enc(\mathbf{b_i}).$$ \label{step:digitalEncryptedTally}
    \item $\enc(\mathbf{Tally})$ is decrypted and published on the BB  along with the proof of the correct decryption.

       $$\BB \enc(\mathbf{Tally}),  \mathbf{Tally}, \DecryptZKP{\mathbf{Tally}}{\enc(\mathbf{Tally})}$$ \label{step:digitalDecryptedTally}

       (In some jurisdictions, this value may be sensitive because of a small set of \protname{} voters.
       See \autoref{sec:otherRLAStyles} for variations that allow public verifiability without publication of the separate tally.)
    \item The votes and serial numbers are extracted from $\mathcal{D}_1$ and then mixed---denote this by $\mathcal{D}_2:=\enc(sn_i),\enc(\mathbf{b_i}): i=1\ldots{}n$. The mixed pairs along with the proof of the correct mix operation are published on the BB. So, at the end of this step, we have the following data on the BB:
\begin{align*}
&\BB \mathcal{D}_3:=(\enc(sn_j),\enc(\mathbf{b_j})):j=\pi(i),j=1\ldots{}n.\\ \nonumber
&\BB \MixZKP{\mathcal{D}_2}{\mathcal{D}_3} \nonumber
\end{align*}

    where $\pi$ is the (secret) permutation applied by the Mixnet. \label{step:mix}
    \item Serial numbers are decrypted and then the proof of the correct decryption is published. Therefore, at the end of this step, we have the following data on the BB.
\begin{align*}
    &\BB \\ \nonumber
    &\mathcal{D}_4:=(K^{sn_j},\enc(\mathbf{b_j}),\DecryptZKP{K^{sn_j}}{\enc(sn_j)}):j=1\ldots{}n.
\end{align*}

    It is also checked that there are no duplicate serial numbers. Any occurrence of duplicate serial numbers indicates significant problems with a voting computer---if there are any, the process should stop.
	\label{step:decryptDigitalSerialNumbers}

 \label{step:lastStepOfElectronicProcess}

\end{enumerate}

\subsection{Paper path} \label{subsec:paperPath}

At the end of the voting process, the ballot papers are sent to the \voteCollectionCenter\ individually in separate  envelopes. Each paper record undergoes the following processing steps.

\begin{enumerate}
	\item[6.]  If the jurisdiction has an existing process of scanning traditional handwritten signatures, this would  apply at this point. Envelopes with invalid signatures must be set aside and handled properly.
 
Each incoming envelope's sticker is scanned and its corresponding digital signature is verified. It is necessary to check that the data being signed matches the corresponding data on the BB for this voter, that the voting computer's signature is valid, and that the CAC ID is registered to vote in this jurisdiction. Then, the envelope is accepted and its corresponding record on the BB is marked as received. If any of these verifications fails, the process continues but the envelope is set aside unopened in a stack called ``Rejected Envelopes'' for further investigations. If the signature is valid but another validly signed envelope has already been received from the same voter, it is set aside in a stack called the ``Eligibility Problem'' stack.

   For all validly signed envelopes (excluding the ones inside the ``Eligibility Problem'' stack), the envelope contents are removed and physically shuffled, similar to standard postal voting procedures.
 \label{s:scan-papr}

    \item [7.] \label{step:check-serial-num} Each incoming ballot's serial number $psn$ is scanned. For each serial number, $K^{psn}$ is computed and it is  checked if there is a  digital ballot with a matching  $K^{sn}$ on the BB.\footnote{Because of the way ElectionGuard encryption works, the number that naturally drops out after decryption is $K^{sn}$, not $sn$. Because we care only about exact matches, it does not matter whether we work in the exponential or plain form. We do not assume that $K^{sn}$ hides $sn$, and do not need to hide the values of $sn$, because these are not publicly associated with the voter.} Then, the location of the paper ballot is appended to the corresponding digital ballot on the BB.\footnote{This is its physical location in paper ballot storage, as in a ballot manifest. For example, ``23rd ballot in batch 7, cabinet 12.''} Each digital ballot is accompanied by a parameter called $\Type$ where $\Type=\text{``not matching''}$, $\Type=\text{``one-to-one''}$ and $\Type=\text{``duplicated''}$ respectively indicate if there exist zero, one or multiple paper ballots with the matching serial number. Therefore, we have the following data on the BB:
    
    \begin{align*}
        \BB& \mathcal{D}_5:=\{(K^{sn_j},\enc(\mathbf{b_j}),\{\pos{}_{j'}\}_{j'},\Type_j):\\ \nonumber
        &j=\pi(i),j'=\pi'(i),j=1\ldots{}n.\}\nonumber
    \end{align*}

The permutation  $\pi'$ represents a physical shuffle of paper ballots, and $\pos{}_{j'}$ denotes the location of the associated paper ballots in storage. For digital ballots with $\Type=\text{``duplicated''}$ the positions of all matching ballots are recorded on the BB. For digital ballots with $\Type=\text{``not matching''}$, the parameter \pos{} is null.

Any paper ballot with no matching $K^{sn}$ on the BB is set aside. Call this the ``paper-only'' stack.  These need to be dealt with outside the cryptographic protocol.

\label{s:scan-papr2}

\end{enumerate}

\subsection{RLA}
\label{subsec:RLA}

\begin{enumerate}
    
\item [8.] At the \localVotingCenter\ the usual local Cast Vote Records (CVRs) are joined with the \protname{} CVRs for auditing purposes, and the ballot manifest is updated to include both kinds of records. When a local ballot is sampled,  it is retrieved and dealt with as usual. When a \protname{} ballot is sampled, its discrepancy is determined according to the guidelines outlined below. Then RLA statistics are updated accordingly, in exactly the same way for remote ballots as they would be for local ones.

\end{enumerate}

    Local officials make local decisions about whether to accept the result or escalate to a larger sample, according to whatever RLA calculations they usually conduct.

\paragraph{Determining discrepancies}
\label{discrepancy-det}
 Assigning  discrepancies to digital ballots is  primarily determined by the stack to which the ballot belongs.   

\begin{enumerate}
    \item [9.1.] If the type for the selected ballot is either $\text{``one-to-one''}$ or $\text{``duplicated''}$, its encrypted vote $\enc(\mathbf{b})$ is privately decrypted to $\mathbf{b}$ and, alongside the proof of correct decryption, is supplied to all auditors and observers in the \localVotingCenter, who verify the NIZKPs. They have access to the following data for the selected ballot:

$$\textbf{Observers: } K^{sn_j},\enc(\mathbf{b_j}),\mathbf{b_j},\DecryptZKP{\mathbf{b_j}}{\enc(\mathbf{b_j})},\{\pos{}_{j'}\}_{j'}$$

Then, its discrepancy is determined as follows.
\begin{itemize}
    \item Identify the paper ballot situated at position $\pos{}_{j'}$ and read its serial number $psn$. If $K^{psn}$ does not match $K^{sn_{j}}$, output ``serial number error'' and terminate.
    \item Compare the vote on the paper $\mathbf{pb}$ to $\mathbf{b_{j}}$ and record the discrepancy, exactly like any other RLA.
\end{itemize}
For ballots of type $\text{``duplicated''}$, the above 2-step procedure is performed for all $j'$ values and the final discrepancy is set to the \emph{minimum} discrepancy among different $j'$ values for the digital ballot. In other words, in case
of several paper ballots, one of the ballots with lowest discrepancy value is selected.

Any occurrences of the ``serial number error'' indicates a significant problem with the scanning device and hence necessitates restarting the entire process from step 7 onward using another scanning device.

\item [9.2.] For ballots with $\Type=\text{``not matching''}$, the encrypted ballot is privately decrypted and, alongside the proof of correct decryption, is supplied to all auditors and observers. Then, the discrepancy is determined by setting it to the maximum possible value for such a ballot, or in other words, employing a worst-case paper assumption for the given ballot.

\begin{remark}
\label{remark:oneaudit}
    An alternative  to handle ballots with $\Type=\text{``not matching''}$ is based on ONEAudit as follows. All encrypted ballots with $\Type=\text{``not matching''}$ are homomorphically added and its tally is privately decrypted for the auditors and observers in the \localVotingCenter.  Each individual CVR is then taken to be the average \emph{overstatement net equivalent} CVR\footnote{For example, given that 4 out of 10 'not matching' votes were cast for the first candidate and 6 out of 10 for the second, the resulting vote shares are 0.4 and 0.6, respectively.}. The (absent) paper ballot is taken to be the worst-case ballot.
\end{remark}

\end{enumerate}

    Note that this system does \emph{not} allow for individual recorded-as-cast verification: individual
    voters do not get evidence that their electronic record matches their intention. This evidence---or rather, evidence
    about whether the discrepancy is large enough to alter the result---is provided collectively
    through the matching and subsequent audit processes.

    A summary of all verification steps is provided in Appendix~\ref{subsec:verification}, with details of error handling in Appendix~\ref{subsec:failure-handling}.

\subsection{A simpler design for the \nodelaynostuff\ model}
If, instead of mailing envelopes individually, ballots are collected in a ballot box and conveyed in an organized manner, the voting process would differ, as no sticker is provided to the voter and  ballots are not placed in envelopes. Consequently, voters solely verify their ballots and place them in a ballot box. Furthermore, in the ballot box scenario, step 6, which is primarily relevant to sticker and envelope processing, will be omitted. However, it is crucial to verify that the number of ballots in the box for each ballot style does not exceed the corresponding quantities on the BB.  For \nostuff\ model this condition is automatically met through step 6. The whole digital path and Step 7 in the paper path remain consistent with the details outlined in Sections~\ref{dig-path} and~\ref{subsec:paperPath}, respectively. The RLA process is also the same as that outlined in Section~\ref{subsec:RLA}.

    \subsection{Counting the number of votes and creating a trustworthy ballot manifest}
    \label{sec:VoteCount}

    As voters cast their ballots using their CAC cards, the digital signature associated with each vote on the BB offers a dependable method to authenticate the voter.
    The purpose of the digital signatures is to complicate ballot stuffing---only ballots (either electronic or paper) with a valid accompanying digital signature will be accepted.  The voting computer's digital signature needs to be explicit on the mail sticker, but is implicit on the BB because that channel is already authenticated.

    However, there is still the possibility of an attack in which a malicious voting computer colludes with some other machine that (at some point) has CAC access, in order to fabricate an apparently-valid digital vote from a voter
    who did not intend to vote (and did not appear at the polling place). Since this is electronically indistinguishable from valid voting, it must be defended by human procedures.
    \begin{itemize}
        \item Officials at the \remoteVotingCenter\ must keep a count of the
        total number of properly completed votes.
    \end{itemize}
    It is a requirement of any RLA that there must be a record of how many ballot papers there are, derived independently of the scanners that are being audited. Otherwise, if the paper records are also delayed or dropped, there is no way to know how many ballots have been dropped, and consequently what the implications for the accuracy of the election result might be.

    There also needs to be a way to verify that ballots have not been dropped  from the electronic record.
    This needs to be supported by one of the following practical procedures.
    \begin{itemize}
        \item Officials at the \remoteVotingCenter\ could keep a record of who voted there and verify that it (eventually) matches the set of valid signatures on votes on the BB.
        \item ``\BBInclusionCheck{}'': Every voter (or their delegate) would need to verify the inclusion of their vote on the BB. The envelope sticker can be used as participation evidence.
        \item ``\StickerBBUpload{}'': Every voter (or their delegate) would upload the sticker's data to the server and receive a signed receipt. When possible, the server would send it to the BB. Then, everyone can check the presence of a corresponding vote on the BB for every uploaded sticker.

    \end{itemize}
    Options~2 and~3 are variations of the same concept: the actual BB upload might take a long time, so we ensure that the voter gets an ``inclusion promise'', which can be used as evidence that she voted. In the case of ``\BBInclusionCheck'', the envelope sticker itself is the ``inclusion promise''; in the case of ``\StickerBBUpload'', the server's signature on that value is an inclusion promise from the server. This allows for evidence of different kinds of malfeasance from different parties, and requires different forms of verification.
    Using either of the stated methods to process the sticker's data does not compromise privacy  and may be done by voters, officials or bystanders.

In the \nodelaynostuff\ model, assuming each voter verifies that their ballot aligns with their intentions, the paper manifest accurately reflects the legitimate voters' intentions. Since paper ballot drop is impossible in this model, digital ballot drop is effectively prevented if an authority ensures that the number of paper ballots in the ballot box does not exceed the number of digital ballots recorded on the BB.

    Table~\ref{Tab:verification} summarizes the verification steps required by the voter under various adversarial models.

    \begin{table}
        \begin{center}
            \caption{Verification steps required by the voter under various adversarial models.}
            \label{Tab:verification}

            \begin{tabular}{||l||c|c|c||}
                \hline
                \hline

                Adv.&Voter Intent&Sticker Sig.&BB Incl. Check or\\
                Model&Verification&Validation$^{\dagger}$ &Sticker BB Upl.$^{\dagger}$\\

                \hline
                \Eonly\ &$\checkmark$&-&-\\
                \EandD&$\checkmark$&$\checkmark$&$\checkmark$\\
                \hline\hline
            \end{tabular}
            {\small \item $^\dagger$: Also feasible for an official at the \remoteVotingCenter}
        \end{center}
    \end{table}

    \section{Security analysis}
\subsection{Verifiability}\label{subsec:security-verifiability}

We begin by outlining the audit process in an ideal scenario featuring trustworthy paper ballots and a one-to-one correspondence between each paper ballot and its electronic record. A specific subset of voters in the ideal world is controlled by the adversary, but other voters follow the voting instructions.

Next, we progressively transform this ideal world into our protocol (i.e., the Real World) through a series of intermediate worlds. At each stage, we establish the validity of the following  statement: \emph{any combination of digital and paper ballots in the newly defined world corresponds to a set of digital and paper ballots in the present world. This correspondence ensures identical (1) reported totals, (2) number of voters, and (3) honest voters. Furthermore,}

\begin{itemize}
      \item   \emph{The digital ballots in the newly defined world are a permutation of those in the present world, or}
      \item \emph{For any ballot selected for the RLA in the present world, the discrepancy value is equal to or smaller than the discrepancy value of its corresponding ballot in the newly defined world.}
    \end{itemize}

    These statements, based on Remarks~\ref{RL remark} and~\ref{rem:phantom-remark} respectively, imply that for any given risk limit, when the announced outcome is wrong, if the RLA functionality with any given sample set accepts the announced outcome in a newly defined world, it will accept the corresponding outcome in the present world. The intuition is that inconsistencies between CVRs and paper records are handled in such a way that the overall risk value does not decrease from one scenario to the next.

    Since the last defined world is the same as the real \protname{} protocol, the proof shows that it has no greater likelihood of accepting a wrong result than an `ordinary' RLA conducted in the ideal world on the trustworthy paper ballots with a one-to-one correspondence between each paper ballot and its CVR.

\paragraph{Ideal World}
Consider an Ideal World where every voter  visits a local polling place to verify their paper ballot and puts it in a ballot box. A subset of voters is under the control of the adversary. Consequently, the adversary determines the participation status of each in the voting process. The adversary can also influence these voters, leading them to deviate from the prescribed voting process instructions.

The voting machine records a digital ballot for each voter, where the validity of the digital ballot content is publicly verifiable. The resulting CVRs are accurately incorporated into the local CVRs where authorities at the \localVotingCenter{} verify the correctness of this operation (i.e., Section \ref{sec:Ver-at-voteCollectionCenter}, Verification at the \voteCollectionCenter, item \ref{item:tally-added-to-CVRs}).

In this ideal world, the voter's identifier is recorded on both the paper and its digital record. A precise one-to-one correspondence exists between each paper ballot and its electronic record. This correspondence is achieved relying on  a trusted authority who verifies whether the digital ballot for the voter appears on the BB before allowing the voter to put their ballot in the box. In other words, the trusted authority ensures the fulfillment of two critical requirements: (1) the identifiers printed on the paper ballots form a subset of the identifiers recorded on the BB (ensuring inclusion), and (2) the quantity of votes in the ballot box is equal to the number of ballots recorded on the BB (ensuring exclusion). We assume the voting authority also has a way to verify each voter's identity.

The adversary has control over all electronic devices, including the voting computer. Therefore, the voter's selection on each paper ballot  might be different from its corresponding digital record. After tallying the digital ballots and reporting the voting outcome, we need to verify if discrepancies between digital and paper ballots are not enough to change the voting outcome. So, we execute an RLA process  by unsealing the ballot box and randomly sampling from the paper records and using the voter's identifier printed on the chosen paper to locate its corresponding digital record.  Then, discrepancies are properly determined and then correctly incorporated into the RLA statistics where authorities at the \localVotingCenter, verify the correctness of the operation (i.e., Section \ref{sec:Ver-at-voteCollectionCenter}, Verification at the \voteCollectionCenter, item \ref{item:disc-RLA-correctness}).  To guarantee that an incorrect voting outcome will lead to a manual tally with a predefined risk limit, each voter must check if the selections, printed on their paper ballot, is correct (i.e., Section \ref{sec:Cast-as-intended verification}, Cast-as-intended verification by the voter, item 1).

  \paragraph{Ideal World Game}
    The adversary $\mathcal{A}$ wins the ideal-world game if in the Ideal World,
    \begin{itemize}
        \item the apparent election result, according to the (\protname{} and regular) CVRs, is different from the actual result, according to the (\protname{} and regular) paper ballots, and
        \item the RLA procedure with a configured risk limit $\alpha$ (including the \protname{} cryptographic verification) confirms the result with a probability non-negligibly greater than $\alpha$, where the probability is taken over the random ballot selections of the RLA, randomization in cryptographic algorithms, and the adversary's random choices.
        \end{itemize}

\paragraph{Real World}
This world resembles our protocol in the \nostuff\ or \nodelaynostuff\ model, with an adversary with the following capabilities:
\begin{itemize}
    \item Controlling a subset of voters, causing them either not to vote or to deviate from established procedures.
    \item Printing arbitrary serial numbers on paper ballots.
    \item Controlling the voting computer, thereby recording digital records that are inconsistent with the voter's selections. 
    \item Dropping some envelopes in the \nostuff\ model.
    \item Controlling the scanning device (used in step \ref{step:check-serial-num} of the protocol to scan paper ballots' serial numbers), causing it to record  arbitrary serial numbers.
\end{itemize}

The Real World Game is defined similarly to the Ideal World Game, with the distinction that it is set in the Real World. 

    Our main result is that \protname{} does not give the adversary a non-negligibly higher chance of causing the RLA to accept a wrong result, than running the normal RLA without \protname.
\begin{coro}\label{coro:main-result}
    Assuming all verification steps (outlined in Appendix~\ref{subsec:verification}) are successfully completed, for any adversary $\mathcal{A}$ who wins the Real World Game, there exists an adversary $\mathcal{A}'$ who wins the Ideal World Game.
\end{coro}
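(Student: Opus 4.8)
The plan is to construct the Ideal-World adversary $\mathcal{A}'$ by a sequence of game hops that interpolate between the Real World and the Ideal World, exactly as the surrounding text foreshadows. I would define a chain of worlds $W_0=\text{Ideal},W_1,\dots,W_k=\text{Real}$, each obtained from its predecessor by switching on one feature of the protocol, and at every hop maintain the stated invariant: the ballot configuration of the later world maps to a configuration of the earlier world with identical reported totals, identical voter count, and identical set of honest voters, and with the discrepancy of every RLA-sampled ballot in the earlier world no larger than that of its image in the later world (or, for the mixing hop, the digital ballots being merely a permutation of one another). The adversary $\mathcal{A}'$ then runs $\mathcal{A}$ internally and applies the composed correspondence to translate $\mathcal{A}$'s Real-World ballots, serial numbers, envelope drops, and scanner outputs into an Ideal-World configuration.

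I would order the hops so that cryptographic features are removed first and physical features last. Concretely: (i) replace the decrypt-and-publish of the tally (steps~\ref{step:digitalEncryptedTally}--\ref{step:digitalDecryptedTally}) and the associated $\DecryptZKP{\cdot}{\cdot}$ proofs by their true plaintext effect, charging the difference to the soundness of the NIZKPs so that every published value equals the genuine decryption except with negligible probability; (ii) introduce the mixnet of step~\ref{step:mix}, where the correspondence is the secret permutation $\pi$ and the ``digital ballots are a permutation'' branch of the invariant applies, with correctness charged to $\MixZKP{\cdot}{\cdot}$ and to the duplicate-serial halt of step~\ref{step:decryptDigitalSerialNumbers}; (iii) replace the trusted-authority one-to-one matching of the Ideal World by the serial-number scan-and-match of step~\ref{step:check-serial-num}, producing the \Type{} labels ``one-to-one'', ``duplicated'', and ``not matching''; (iv) switch on adversarial serial numbers printed on paper and recorded by the (adversarial) scanner, so that matching is by $K^{sn}$ rather than by voter identifier; and (v) in the \nostuff{} model, switch on envelope dropping. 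Hops (iii)--(v) use the discrepancy-domination branch of the invariant.

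The conclusion then follows mechanically. By Definition~\ref{def:monotonic} and Remark~\ref{rem:phantom-remark} the RLA functionality is monotonic, so decreasing the discrepancy of any sampled ballot never turns an ``accept'' into a ``reject''; hence, for every fixed sample, an ``accept'' in the later world forces an ``accept'' in the earlier world. Because each correspondence is a bijection on ballots that carries the uniform sampling distribution to itself (through $\pi$ for the mixing hop and through the shared voter index otherwise), probabilities transfer directly between consecutive worlds. Chaining over all $k$ hops, whenever the Real-World RLA accepts on a sample, the Ideal-World RLA accepts on the corresponding sample, so the Ideal-World acceptance probability is at least the Real-World one minus the negligible slack from hops (i)--(ii). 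Since the correspondence preserves reported totals and honest voters, $\mathcal{A}'$'s apparent outcome is the same wrong outcome, so its configuration still satisfies apparent $\neq$ actual; and by Remark~\ref{RL remark} it is legitimate for $\mathcal{A}'$ to commit to these CVRs before the sample is drawn. Thus $\mathcal{A}'$ wins the Ideal World Game whenever $\mathcal{A}$ wins the Real World Game, which is \autoref{coro:main-result}.

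The hard part will be verifying the discrepancy-domination direction for the two conservative rules while simultaneously preserving apparent $\neq$ actual. For ``not matching'' and dropped ballots the protocol assigns the worst-case (phantoms-to-zombies) discrepancy of~\cite{banuelos2012limiting}, and I must show both that this dominates the discrepancy the corresponding trustworthy Ideal-World paper ballot would yield and that $\mathcal{A}'$ can realize exactly that Ideal paper ballot for an adversary-controlled voter without disturbing the honest-voter set or the reported totals. The genuinely delicate case is ``duplicated'', where the protocol takes the \emph{minimum} discrepancy over the matching papers: this moves the discrepancy in the accept-favoring direction, so I must argue that the lower-discrepancy paper is always realizable as the Ideal-World ballot of the \emph{same} voter—so that equality rather than a forbidden strict decrease holds—and that in the \nostuff{} and \nodelaynostuff{} models, where stuffing is excluded, a second matching paper can only come from an adversary-controlled voter whose paper $\mathcal{A}'$ is free to set. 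Getting this bookkeeping right across all \Type{} labels, the ``serial number error'' halt, and both adversary models, while keeping the three invariants intact, is the crux of the proof.
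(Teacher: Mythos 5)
Your skeleton coincides with the paper's own proof: the paper proceeds by exactly such a chain of intermediate worlds (Appendix~\ref{subsec:verifiability-proof}, Propositions~\ref{prop:BMD sig} through~\ref{prop:random-dig}), maintains precisely the invariant you state, and your handling of the crux matches Proposition~\ref{prop:paper-path-3-to-2}---retain the minimum-discrepancy paper as the one-to-one match for `duplicated' records (so equality, not a forbidden decrease, holds), edit or invent paper serial numbers as a pure proof device for `not matching' records, and let the worst-case zombie dominate. Your hops (i)--(ii), (iv) and (v) correspond to Propositions~\ref{prop:enc-mix}, \ref{prop:mal-scan} and~\ref{prop:paper-path-4-to-3}, and your bijection-preserves-sampling remark absorbs the paper's Propositions~\ref{prop:permutation} and~\ref{prop:random-dig} (the permuted serials and the swap from paper-sampled to digital-sampled RLA).

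There is, however, a genuine gap: no hop in your chain discharges \emph{authentication}, so the component of your invariant asserting an identical number of voters and an identical set of honest voters is never justified. In the Ideal World a trusted authority enforces inclusion (every honest voter's ballot appears on the BB) and exclusion (no records beyond the actual voters, no spoofed identities). In the Real World that authority does not exist, and the adversary controls every voting computer, so it can in principle suppress an honest voter's digital ballot entirely or post well-formed records for voters who never appeared. Your hop (iii)---serial-number scan-and-match replacing trusted-authority matching---cannot do this work, since it only governs paper-to-digital matching among records \emph{already} on the BB. The paper spends its two endpoint propositions exactly here: Proposition~\ref{prop:BMD sig} reduces digital-ballot drop to unforgeability of the voting computer's sticker signature $\auth_{\textit{Voting Computer}}(Id_i,Id_e,H_i)$ combined with the voter-side checks (sticker signature verification and \BBInclusionCheck{} or \StickerBBUpload{}, items 2--4 of Section~\ref{sec:Cast-as-intended verification}), and Proposition~\ref{prop:exclusion} reduces identity spoofing (electronic stuffing) to unforgeability of the voter's CAC signature, verified publicly on the BB. Your proposal never mentions a signature, so the corollary's hypothesis that ``all verification steps are successfully completed'' is never invoked for the authentication items, and the chain as written cannot reach the Real World. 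A smaller omission of the same kind: in the \nodelaynostuff{} model there are no stickers, and the exclusion half of the invariant is instead preserved by the check that the ballot-box count per ballot style does not exceed the BB count (item~\ref{box-num-leq-BB-num} of Section~\ref{sec:Ver-at-voteCollectionCenter}); your hop (v) covers only envelope drops, not this counting check.
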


\begin{proof}
    The proof for the \nostuff\ model follows directly from Propositions \ref{prop:BMD sig} to \ref{prop:random-dig}. The proof for the \nodelaynostuff\ model follows directly from Propositions \ref{prop:exclusion} to \ref{prop:mal-scan} and Propositions \ref{prop:paper-path-4-to-3} to \ref{prop:random-dig}. These propositions are proven in Appendix~\ref{subsec:verifiability-proof}. They gradually transform a real-world attack into an equally successful ideal-world attack.
\end{proof}

\subsection{Privacy} \label{subsec:privacy}

    \protname{} is intended to run as part of a larger protocol in which most ballots in the election are input by other means.
    However, privacy is difficult to model in that setting, so here we treat \protname{} as a standalone protocol.

    \protname{} in the \nostuff\ model does not provide ballot privacy, because the adversary in this model might drop envelopes and thus use differencing attacks to infer individual votes. For example, if the adversary drops one ballot, labelled on the sticker as Alice's, then the adversary knows whose vote is the unique one with no match. Since there is a non-zero probability that this is audited and opened, Alice's privacy cannot be guaranteed. More generally, since the RLA process may lead to a full manual recount of the arrived paper ballots, differencing attacks are always possible.

    Our privacy proof therefore assumes the \nodelaynostuff{} model---all paper ballots are guaranteed to arrive. It is the RLA step that allows for differencing attacks. Since the digital paths in both models are the same,  \protname{} in the \nostuff\ model excluding the RLA components provides privacy also.

    Opening ballots for the RLA inevitably has some privacy implications, regardless of whether the ballots are mixed cryptographically or physically. \protname{} neither introduces nor solves these problems.
    For example, if range voting is used with only two voters, then the attacker can distinguish a situation in which Bob votes 2 and Alice 0, from one in which Bob votes 1 and Alice 1, if either ballot is selected for audit, though the tallies are the same.

    Even when no ballots are audited, small anonymity sets may reveal individuals' preferences, for example if there are only a very small number of remote votes sent to one \localVotingCenter, and they all make the same choices.
    Group privacy of the whole set of remote voters may also cause concern, for example if the collective choices are strongly skewed relative to the rest of the population.

    Table \ref{Tab:privacy-voting} summarizes what is visible to various participants.
    Access to both the voter's identifier and their vote is limited to the voting computer.
    The main privacy result is that the protocol does not reveal more about an individual voter's intent than the
    published mixed votes do.

    \begin{table}
        \begin{center}
            \caption{Data accessible to various entities}
            \label{Tab:privacy-voting}

            \begin{tabular}{||l||c|c|c|c|c||}
                \hline
                \hline

                Entity&ID&SN&ID\&SN&Vote&Tally\\
                \hline
                General public&$\checkmark$&$\checkmark$&$\times$&$\times$&$\checkmark^{\dagger}$\\

                Voting computer&$\checkmark$&$\checkmark$&$\checkmark$&$\checkmark$&$\checkmark$\\
                RLA observers&$\times$&$\checkmark$&$\times$&$\checkmark$&$\checkmark^{\dagger}$\\
                Postal worker&$\checkmark$&$\times$&$\times$&$\times$&$\checkmark^{\dagger}$\\

                \hline\hline
            \end{tabular}
            {\small \item $^\dagger$: visible on the BB}
        \end{center}
    \end{table}

    \begin{definition}
        The privacy attacker may view the following:
        \begin{enumerate}
            \item the complete contents of the BB, including vote tallies,
            \item envelope stickers arriving at the \localVotingCenter,
            \item paper ballots that arrive at the \localVotingCenter, after they have been physically disassociated from their identifying envelope,
            \item decryptions (and proofs of proper decryption) of mixed votes that are sampled for audit,
        \end{enumerate}
        and may control:
        \begin{enumerate}
            \item all voting computers for corrupted voters,
            \item all but one of the mixing servers,
            \item a sub-threshold number of decryption authorities.
        \end{enumerate}
        Subverting the privacy-preserving process of paper ballot handling (e.g. by opening the named envelope and reading its contents), is outside the attacker model.
        The privacy attacker may not open vote envelopes, and does not see honest voters' serial numbers or votes during the voting phase.
    \end{definition}

    We use an instance of the privacy definition BPRIV from~\cite{bernhard2015sok}, in which the algorithms for
    Setup, Vote, Valid, Publish, Tally and Verify are derived from the \protname{} protocol. Importantly, the \emph{Tally} protocol is defined to publish both the decrypted vote tally from Step~\ref{step:digitalDecryptedTally} and also the random permutation of individual votes from Step~\ref{step:lastStepOfElectronicProcess}.

    The attacker $\adv$ is allowed to corrupt some voters, and is trying to guess how the honest ones voted. $\adv$ is allowed to place different (encrypted) votes for each honest voter on $BB_0$ and $BB_1$. The challenger will choose $\beta \in \{0,1\}$ at random. If $\beta=0$, the adversary is shown the contents of $BB_0$, the tally from $BB_0$ and a validly-generated proof of mixing and decryption to produce that tally; if $\beta=1$, the adversary is shown the contents of $BB_1$, \emph{the tally from $BB_0$} and a simulated proof that $BB_1$ produces that tally (which it may not do in reality). If the adversary cannot distinguish these two transcripts, it cannot guess how individuals voted.

    Our proof relies upon privacy properties of both the underlying ElectionGuard cryptographic library and the mixing ZKP.
    In both cases, these assumptions seem reasonable though we are not aware of complete proofs in the literature.

    \begin{prop}\label{prop:main-privacy}
    \protname{} in the \nodelaynostuff\ model provides ballot privacy according to BPRIV~\cite{bernhard2015sok}, assuming that the underlying ElectionGuard ciphertexts satisfy NM-CPA and the mixing proof is zero knowledge.
    \end{prop}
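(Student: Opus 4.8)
The plan is to prove \autoref{prop:main-privacy} by the standard BPRIV game-hopping technique, transforming the $\beta=0$ experiment (board $BB_0$, honest tally of $BB_0$, real proofs) into the $\beta=1$ experiment (board $BB_1$, tally of $BB_0$, simulated proof) through a sequence of indistinguishable hybrids. The leakage that must be matched is the full output of the \emph{Tally} protocol: the decrypted aggregate $\mathbf{Tally}$ of Step~\ref{step:digitalDecryptedTally} together with the shuffled list of individual votes from Step~\ref{step:lastStepOfElectronicProcess}, whose ciphertexts may later be opened during the RLA. I would first dispatch the two standard BPRIV side conditions, \emph{strong consistency} and \emph{strong correctness}, so that the published result is a deterministic function $\rho$ of the decrypted board and honest ballots are always accepted. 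The duplicate-ciphertext weeding of Step~\ref{step1-dig} and the per-ballot validity proof $\ValidZKP{\cdot}$ are exactly what give ballot independence and rule out replays, which is what these conditions need.

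\textbf{The hybrids.} First I would replace every real proof in the transcript---the mixing proof $\MixZKP{\mathcal{D}_2}{\mathcal{D}_3}$ and all decryption proofs $\DecryptZKP{\cdot}{\cdot}$ for the tally, the serial numbers, and the audited votes---by outputs of the zero-knowledge simulator. This is justified directly by the assumed zero-knowledge property of the mix proof and of ElectionGuard's decryption proofs, and it needs only the published statements, not the permutation or the secret keys. Next I would relocate the action of the single honest mixing server: instead of re-encrypting and permuting its input, it emits fresh encryptions of the multiset of votes underlying $BB_0$ in random order, with the already-simulated mixing proof vouching for it. Because re-encryption mixing preserves the plaintext multiset, the corrupt servers following it cannot change that multiset without producing an invalid proof. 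The threshold decryptions (tally, serial numbers, audited votes) are then simulated for the sub-threshold set of corrupt authorities so that every opened value is consistent with $\rho(BB_0)$. After these hops the adversary's entire view depends on $BB_0$ only through $\rho(BB_0)$, while the individual ciphertexts posted on the board are still those of $BB_0$.

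\textbf{The final, load-bearing hop.} The last step swaps each honest voter's posted ciphertext from $\enc(v_0^{(i)})$ to $\enc(v_1^{(i)})$, yielding exactly the $\beta=1$ experiment, and this is where NM-CPA is used and where the difficulty lies. The reduction must embed many challenge ciphertexts (one per honest voter) while the adversary is allowed to post its own corrupted-voter ballots, which are homomorphically folded into $\enc(\mathbf{Tally})$ and fed through the mixnet alongside the honest ones; plain IND-CPA is insufficient precisely because of this mingling, and non-malleability together with weeding and the validity proof-of-knowledge is what guarantees that a corrupted ballot cannot smuggle out information about an honest one. The reduction goes through only because, after the earlier hops, \emph{no challenge ciphertext is ever decrypted}: the aggregate is answered by the simulated tally decryption, and every individual opening the RLA performs is an opening of the honest mix server's \emph{replacement} ciphertexts (encryptions of $BB_0$'s votes), never of the posted ballots. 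Maintaining this invariant---that all decryption queries can be answered without touching any challenge ciphertext, even across an adaptively chosen RLA sample and across the simulated threshold shares---is the main obstacle and the most delicate part of the argument. I would also flag, as the paper does, that complete formal proofs of NM-CPA for ElectionGuard's exponential ElGamal and of zero knowledge for the mixing proof are assumed here rather than established, so the result is conditional on those assumptions.
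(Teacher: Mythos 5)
Your electronic-side argument mirrors the paper's proof almost step for step, and in places tightens it: the paper's game $G_0$ simulates the mixing ZKP and all decryption ZKPs exactly as your first hop does, and its games $G_1,\ldots,G_n$ swap the honest ballots of $BB_0$ for those of $BB_1$ one at a time, with NM-CPA (the paper's Assumption~\ref{assumption:EG-NonMalleable}) absorbing each swap and the duplicate-ciphertext weeding of Step~\ref{step1-dig} playing precisely the role of the $\bot$ answer in the NM-CPA game, just as you say. Your invariant that no challenge ciphertext is ever decrypted is the same mechanism the paper implements by having $\textsf{SimProof}$ assert $BB_0$'s plaintexts (votes, aggregate, and---note---the serial numbers, which your sketch omits but which are also decrypted in Step~\ref{step:decryptDigitalSerialNumbers} and must match the audited papers) under simulated decryption proofs; your extra hybrid in which the honest mix server emits fresh encryptions of $BB_0$'s vote multiset is a reasonable alternative way to enforce that invariant, though the paper gets by without it. One simplification you could have claimed: in the paper's adapted BPRIV, the $\mathcal{O}cast$ oracle has the \emph{challenger} encrypt the corrupt vote, so the adversary never posts a raw ciphertext of its own and the decryption-query bookkeeping you flag as delicate is lighter than in the standard Helios setting.

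The genuine gap is the paper channel. \protname{}'s privacy adversary sees more than the bulletin board: it sees the signed, identity-bearing envelope stickers arriving at the \localVotingCenter{}, the paper ballots after they are disassociated and physically shuffled, and the RLA run against those papers---and the paper's adapted tally oracle explicitly shows the adversary the stickers and paper ballots from $\text{Box}_0$ in both the $\beta=0$ and $\beta=1$ cases. Your hybrids never account for this evidence. The paper needs a dedicated assumption (Assumption~\ref{assumption:mix-physical}) stating that the physical shuffle is simulatable: the adversary cannot distinguish ballots arriving in signed envelopes and being physically shuffled from a different set of papers arriving in the same envelopes followed by a simulated shuffle that displays the same ballots. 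This assumption is invoked twice---once in passing to $G_0$ and again at every NM-CPA hop, to argue that the paper evidence does not help distinguish adjacent games. Without it your final hop does not close: the envelope-to-ballot linkage, observed physically, could correlate identities with votes without any cryptographic break, so a purely electronic BPRIV argument is incomplete for this protocol. Adding that physical-indistinguishability assumption and invoking it alongside each hybrid would complete your proof; your closing caveat about the unproven NM-CPA and zero-knowledge assumptions matches the paper's own framing.
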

            The (adapted) definition of BPRIV, definitions of the relevant functions, and the proof, are in Appendix~\ref{subsec:privacy-proof}.

\subsection{Receipt freeness} \label{subsec:security-receipt-freeness}

    We assume the presence of a coercer outside the voting booth who is already aware of the target voter's identity and requests their serial number. Given that (1) serial numbers are ultimately disclosed on the BB and (2) serial numbers are sufficiently long, the voter is unable to provide false information about their serial number. Nevertheless, the coercer cannot rely on the voter's honesty regarding their cast vote.
In other words, the voter cannot convince the coercer that the encrypted value $\enc(\mathbf{b_i})$ printed on the the sticker or published on the BB before the mixnet or the $\enc(\mathbf{b_i})$  in the \remoteCommittedCVR\ (i.e. $(sn_i,\enc(\mathbf{b_i}))$ after the mixnet) is decrypted to the $\mathbf{b_i}$ that is claimed by the voter as her vote. However, as Table \ref{Tab:privacy-coercer} shows,  collusion between a coercer and auditors or observers at the \voteCollectionCenter\ could enable the use of serial number data to connect voters' identities with their votes.

    Appendix~\ref{subsec:batch} elaborates on the removal of serial numbers from our protocol. This ensures the preservation of voter privacy, even in the event of collusion between a coercer and auditors or observers at the \voteCollectionCenter.

\begin{table}
\begin{center}
  \caption{Data accessible to the coercer and the RLA observers.}
  \label{Tab:privacy-coercer}

\begin{tabular}{||l||c|c|c|c|c||}
 \hline
 \hline

  	Entity&ID&SN&ID\&SN&Vote&Tally\\
 \hline

Coercer&$\checkmark$&$\checkmark$&$\checkmark$&$\times$&$\checkmark^{\dagger}$\\
RLA observers&$\times$&$\checkmark$&$\times$&$\checkmark$&$\checkmark^{\dagger}$\\

\hline\hline
  \end{tabular}
{\small \item $^\dagger$: visible on the BB}
  \end{center}
 \end{table}

\section{Using \protname{} with other RLA styles, avoiding the publication of \protname{} sub-tallies} \label{sec:otherRLAStyles}
    If a \localVotingCenter{}\ has only a very small number of \protname{}\ ballots, it may not be acceptable to publish their tally in \autoref{step:digitalDecryptedTally}.
    When an attacker has direct access to the arriving mail ballot, we do not have a solution to this issue.
    However, if we consider a remote attacker who is looking at the sub-tallies on the BB, we can address the problem by including the \protname{}\ tally into a larger tally of ballots. In more details, the ordinary ballots from the \localVotingCenter{} can simply be appended to the \protname{}\ ballots and then the tallies are  computed over the whole set of votes. Appendix~\ref{App:otherRLAStyles} describes multiple variants that
     obviate the need to publish the sub-tally of \protname{}\ ballots in \autoref{step:digitalDecryptedTally} of the \protname{}\ digital path. Some also avoid printing individual serial numbers on ballots, relying on batch numbers that identify a group of ballots instead.  Appendix~\ref{subsec:MergeWithVault} describes how to incorporate \protname{} into audits using VAULT~\cite{benaloh2019vault}, retaining serial numbers, mixing, and individual ballot-level comparisons. Appendix~\ref{subsec:SubMerge} describes how to incorporate \protname{} ballots into larger batches and audit them as a batch---this requires neither mixing nor serial numbers, but may result in a less efficient audit.

   These variations may also be useful for localities that use either VAULT or batch-level comparison audits anyway.

\section{Implementation results}

In this section we discuss our prototype library for the \protname{} protocol.
The prototype library\footnote{Available at
    \anon{\url{https://github.com/JohnLCaron/egk-ec-mixnet}}}
    has an ElectionGuard 2.0 implementation,
    as well as Elliptic Curve (NIST P-256) and mixnet implementations derived from the Verificatum library\footnote{See https://www.verificatum.org/},
    along with an option to use GMP\footnote{See https://gmplib.org/} for the low-level computations, also derived from Verificatum.
    Note that this is a prototype library for proof-of-concept purposes, not a production-quality cryptographic library.
    For example, we have not (yet) implemented hash canonicalization as per the ElectionGuard specification.

    For the measurements in this section we used a fast CPU (Intel Xeon CPU E5-2680 v3 @ 2.50GHz), for ballots with
   7 contests, each having 4 selections.
    Each ElectionGuard contest has one more ciphertext than the number of real selections, so there are $\text{numberOfContests} + \text{totalNumberOfSelections} = 7 + (7 \times 4) = 35$ encryptions per ballot.
    Using elliptic curves, a single ballot like this can be encrypted (and the ZKPs generated) in about 20 ms.
    Encryption is done at the \remoteVotingCenter{}, which we expect to have modest hardware and be significantly slower.
    Ballots can be encrypted in parallel, which scales almost linearly with the number of available cores.

    The rest of the cryptographic processing is done on a server, typically at each Local Counting Center.
    The cost of homomorphic tallying  (.22 ms per encryption) and decryption of the tally (2.2 ms per encryption for 3 trustees) is probably negligible.
    The cost of ElectionGuard verification is about 15 ms per ballot before parallel speedup.

The largest time used in our workflow is for the mixnet and its verification.
    The election admin can run as many rounds of the mixnet as desired; each round consists of a shuffle/reencryption of the ballots and a generation of a proof of shuffle.
    Each round's proof must be verified, but only the last round's shuffle is used in the RLA.
    Using algorithms developed by Verificatum, one round of the mixnet takes about 60 msecs per ballot to run the mixnet and generate the proof.
    Verifying one round of the mixnet proof takes about 80 msecs per ballot.
    Both timings are single threaded on the fast CPU, and we get good speedup when running on multicore machines.

When doing ballot level RLA, the encrypted ballots selected for the audit must be decrypted using standard ElectionGuard threshold decryption.
    This costs about 2.3 msecs per encryption (including generating the decryption proof) for 3 trustees on the fast CPU, single threaded.
    Verifying the decryption proof costs about 15 ms per ballot.

  \section{Conclusion and future work}
    Usable voter-verification---without undermining coercion resistance---remains the major open problem in this field.
    Using paper requires transmitting it back to the counting location; using cryptography demands that the voter successfully performs one of the human-computer-interaction protocols which, despite ingenious recent advances, remain very hard for voters to run successfully.

    Everlasting privacy is an obvious important addition to \protname{}, which could probably be achieved in a standard way (by posting perfectly hiding commitments to the vote, rather than ciphertexts).

    Another interesting direction is to examine how to use cryptography to prevent modification of the paper ballot in transit, without impacting privacy. It is obvious that if the envelope sticker contained a signed hash of the \emph{plaintext} vote, then this could be verified upon arrival---but this breaks privacy. It is interesting to ask whether there is a privacy-preserving way of protecting the integrity of paper ballots in transit.


    \section{Acknowledgements} We would like to thank Dan Wallach for valuable ideas in the early phases of this project. Thanks also to Josh Benaloh, Michael Naehrig, Olivier Pereira,
    John Sebes and Andrew Appel for helpful comments on earlier drafts of this paper, and to Thomas Haines for review of the mixnet code. The UI team at Rice and the development team at VotingWorks greatly improved the practicality and usability of the system. Finally, careful reviews by Trail of Bits and various anonymous reviewers significantly influenced this protocol for the better.

    \section{Open science}
    The only significant artefact of this work, other than the paper, is the implementation. All that code is openly available
    under open licenses, either GPL3.0 or the MIT license. Everything is available for allowing others to clone, run, edit and reuse our code.

    \section{Ethics considerations}
    Work on voting always requires significant ethics considerations. This protocol is designed for practical use, so we have to consider carefully how it might be used in practice. The main ethical consideration---which has been raised by informal reviews we sought before formal submission---is that we cannot guarantee the protocol is run in the conservative way we describe here. People do not always follow instructions, and might be tempted to count electronic votes without running the RLA in the rigorous way we have specified. We acknowledge this risk, but feel that this applies to any security protocol. We have been careful to specify as clearly as possible the procedural supports necessary to run this protocol securely.

\appendix

\section{Optimizations for multiple contests}
Up to now, we have assumed that the election consists of a single contest, resulting in the vote vector $\mathbf{b}=b_{n-1}\|\cdots\| b_0$ containing one bit per candidate within that contest. Therefore, the encrypted vote vector $\mathbf{e}$ is computed as:$$\mathbf{e}=\enc(\mathbf{b})=(\enc(b_{n-1}),\cdots,\enc(b_0))$$ To expand our design to accommodate multiple contests while maintaining a common ballot style, we can consider $\mathbf{b}$ as a concatenation of multiple vote vectors, with each vote vector corresponding to a distinct contest. For example, in an election featuring two contests, we express $\mathbf{b}$ as $\mathbf{b_1}||\mathbf{b_0}$, where $\mathbf{b_0}$ and $\mathbf{b_1}$ represent the vote vectors for contests 0 and 1, respectively. The encrypted vote vector $\mathbf{e}$ is accordingly computed as$$\mathbf{e}=\enc(\mathbf{b})=(\enc(\mathbf{b_1}),\enc(\mathbf{b_0}))$$   In this configuration, the overall process remains largely unchanged compared to a single-contest election, with the exception that NIZK proofs and discrepancy assignments must be conducted separately for each contest on a ballot. 

Furthermore, to extend our design to accommodate elections with multiple ballot styles, it's required to store the ballot style in plaintext alongside each ballot. This is necessary for tallying the ballots within each contest. We must also ensure that the ballot style for each ballot undergoes the same mixnet process as the ballot itself and is presented in plaintext format before the RLA. Otherwise, if the digital ballot that is selected for the RLA lacks a corresponding paper ballot, it would be unclear which contest's RLA statistics should be updated. 

Consider $\mathbf{b} = \mathbf{b_{k-1}} || \cdots  || \mathbf{b_0}$, representing a set of $k$ distinct contests, where each $\mathbf{b_i}$ is an $n_i$ vote vector representing the $i^\text{th}$ contest. The total number of options in total is denoted as $n$, where $n = \sum_i n_i$. In the typical scenario, we would require $n$ encryption values to represent $\mathbf{e} = \enc(\mathbf{b})$. Then, we conduct the tally and the RLA process separately for each contest. While this approach offers flexibility, it also entails a higher number of encryption values per ballot, namely $n$, which can potentially complicate and slow down the mixnet process. Thus,  we provide some optimizations below.

Let $\mathbf{e}$ comprise $n$ ElGamal encryption values corresponding to an $n$-bit vote vector. To accelerate the mixnet process, following the tally phase, we apply the following $\text{COMP}$ transformation to $\mathbf{e} = (e_{n-1}, \cdots, e_0)$, with each $e_i = (\alpha_i, \beta_i)$, in order to compact it into a single ElGamal encryption value and speed up the mixing process:$$e'=\text{COMP}(\mathbf{e})=(\prod_i\alpha_i^{2^i} \bmod p,\prod_i\beta_i^{2^i} \bmod p)$$

Our assumption here is that $n$ is smaller than the size of the group $|q|$ in bits. For larger vote vectors, we must partition the vector $\mathbf{e}$ and construct separate ciphertexts for each segment. While this approach is highly  efficient, it lacks flexibility for RLA purposes, as outlined below.

Due to the way ElectionGuard encryption works, the decryption output is $K^\mathbf{b}$. For large $n$ (e.g., $n=100$), finding $\mathbf{b}$ via exhaustive search is infeasible. Normally, if such a ballot is selected for an RLA, $\mathbf{b}$  is derived from the paper, and  $K^{\mathbf{b}}$ is compared with $\mathbf{e}$. If these values match, the discrepancy for all contests on that ballot is 0.  However, if they differ, or if a digital ballot with $\Type=\text{``not matching''}$ is selected for the RLA, then discrepancy values for all contests on that ballot must be considered as the maximum possible value (e.g., +2 for a plurality voting).

\section{Protocol specification}
    \subsection{ElectionGuard details}\label{subsec:election-guard}
    To achieve a decentralized architecture, ElectionGuard relies on the collaboration of multiple entities known as 'guardians,' jointly responsible for protecting voters' privacy. Each guardian independently generates its own public-private key pair. These individual public keys are then combined to derive the election public key, crucial for encrypting all voters' selections. Subsequently, in the decryption phase, each guardian computes a verifiable partial decryption, resulting in a full verifiable decryption process.

    To address scenarios where certain guardians may be unavailable, a robust cryptographic mechanism is employed. Guardians cryptographically share their secret keys, ensuring that a pre-defined threshold quorum  guardians is required to achieve full decryption.
    See~\cite{benaloh-electionguard} for details of the key generation protocol.

    \protname{} follows the ElectionGuard 2.0 specification for primitives, as described in \cite{benaloh-electionguard}.
    We give a brief overview of the encryption scheme here.

    \subsection{Notations and building blocks} \label{subsec:notationAndBuildingBlocks}
    The selections of voter $i$, choosing from $m$ candidates, are represented as a binary vector $\mathbf{b_i}$ of length $m$, consisting of zeros for unselected options and ones for selections. In cases where it is clear and doesn't cause confusion, the index $i$ can be omitted for simplicity. For example, Alice is another way to represent the vote vector $\mathbf{b}=[1\;0\;0]$ in a contest with the following candidate list $[\text{Alice}\;\text{Bob}\;\text{Charlie}]$.

    The encryption of the vector $\mathbf{b}$ is denoted as $\mathbf{e}=\enc(\mathbf{b})$, wherein each entry $\mathbf{e}_{j}$ for $j=\{1,\cdots,m\}$ represents the encryption of the corresponding vote entry $\mathbf{b}_{j}$.

    \paragraph{Encryption scheme} In ElectionGuard, encryption of votes is performed using a variant exponential form of the ElGamal cryptosystem. The cryptosystem parameters $(p, q, g)$ are defined as follows: $p$ is a prime number equal to $2kq + 1$, where $q$ is also a prime number, and $g$ is a generator of the order $q$ subgroup of $\mathbb{Z}_p^*$. (The
    ElectionGuard specification requires particular values of $p$ and $q$.) To generate a public-private key pair, a private key $s\in\mathbb{Z}_q$ is randomly selected. The corresponding public key, $K$, is then computed as $K = g^s \mod p$ and made public. To encrypt a vote $b$,  a     nonce $\xi$ is selected randomly such that $0 \leq \xi < q$. Then, encryption of $b$, denoted by $\enc(b)$, is computed as
    $$\enc(b)=(g^\xi \bmod p,K^{b}\cdot K^\xi \bmod p)=(g^\xi \bmod p,K^{b+\xi} \bmod p).$$
    The entity possessing the corresponding secret key $s$ can decrypt $(\alpha, \beta)$ as $\beta/\alpha^s \mod p = K^b \mod p$. If $b$ is sufficiently small, its value can be derived from $K^b$ using an exhaustive search.
    This encryption scheme is additively homomorphic. Namely,$$\enc(b_1,\xi_1)\cdot \enc(b_2,\xi_2)=\enc(b_1+b_2,\xi_1+\xi_2)$$

    We can use the following equation to re-encrypt the ballot $b$ $$\enc(b,\xi_1)\cdot \enc(0,\xi_2)=\enc(b,\xi_1+\xi_2)$$

    \paragraph{Hash function}
    The function $H$ that is used in this whole section is the same as one specified in ElectionGuard, which is based on
    HMAC-SHA-256. The function $H$ takes two inputs, $B_0$ and $B_1$, where $B_0$ is 256 bit long and corresponds to the key in HMAC and $B_1$, which is of arbitrary length, is the actual input to the HMAC. These inputs are separated by a semicolon. If $B_1$ is comprised of multiple elements (e.g. $B_1=a||b||c$), then these elements are separated by commas( e.g. $H(B_0;a,b,c)$).  \\

    \paragraph{Zero knowledge proofs}
    We use the following ElectionGuard zero knowledge proofs, which are assumed to be paramaterized
    by the ElectionGuard group parameters $(p,q,g)$.

    \begin{itemize}
        \item Distributed proof of decryption correctness, used by the guardians to decrypt output ciphertexts.
        \item
        \textbf{$\mathsf{KnowDlog}(K)$}: a proof of knowledge of the discrete log of $K$ base $g$
        (that is, of $x$ s.t. $K = g^x \bmod p$).\\
        \item
        \textbf{$\mathsf{EqDlogs}(x,y,X,Y)$}: a proof of equality of discrete logs.\\
        \item
        \textbf{Proof that $(a,b)$ is an encryption of an integer in the range $0,\cdots,L$}\\
        \item
        \textbf{Proof that $(a,b)=(g^\xi,K^\xi)$ is an encryption of zero or one}\\
        \item
        \textbf{Proof that $(\alpha,\beta)=(g^\xi,KK^\xi)$ is an encryption of zero or one}\\
    \end{itemize}

    See~\cite{benaloh-electionguard} for details of these zero knowledge proofs. These are
    applied for our ``ZKP of proper ballot construction''.

    \subsection{Verification Summary} \label{subsec:verification}

    There are three separate verification stages, which can be verified by different people: cast-as-intended verification, which is mostly done by the voter, universal (BB) verification, which can be done by any member of the public, verification by observers at the \voteCollectionCenter\ that the data on the BB matches the ballot papers and the data included in the tally and the RLA matches the BB.

    Each of these is detailed below.

    \subsubsection{Cast-as-intended verification}
    \label{sec:Cast-as-intended verification}

    \begin{boxDwhitePlaceHere}{By the voter}
        1- Verify that the ballot paper matches the intended vote.
    \end{boxDwhitePlaceHere}

    \begin{boxDwhitePlaceHere}{By the voter or anyone else at the \remoteVotingCenter}
        2- Verify the digital signature on the sticker stuck onto the envelope.

        3- For ``BB Inclusion Check'': Verify that a properly-signed vote corresponding with the sticker's data is (eventually) on the BB,

        4- For ``Sticker BB Upload'': Verify that the sticker's data appears on the BB.
    \end{boxDwhitePlaceHere}

    \subsubsection{BB transcript verification (public)}
    \label{sec:BB-public-ver}

    \begin{enumerate}
        \item For each vote on the BB:
        \begin{enumerate}
            \item Verify voter's digital signature
            \label{item:dig-sig}
            \item Verify ZKP of proper ballot construction
            \label{item:zkp-proof}
        \end{enumerate}
        \item Verify that no two ballots on the bulletin board contain a common ciphertext (only necessary for privacy. See Section \ref{subsec:privacy}).
        \label{item:security-privacy}
        \item Verify mixing ZKP
        \label{item:zkp-mix}
        \item Verify decryption of all serial numbers $sn_i$.
        \label{item:sn-dec}
        \item Verify that each $sn_i$ is unique.
        \label{item:sn-unique}
        \item Verify aggregation of $\enc(\mathbf{Tally})$.
        \label{item:agg-tally}
        \item Verify the decryption $\mathbf{Tally}$ of $E(\mathbf{Tally})$.
        \label{item:zkp-tally-dec}
        \item For ``Sticker BB Upload'': verify all uploaded sticker digital signatures and verify the presence of a corresponding ballot on the BB for each sticker.
        \label{item:stickerupload}
    \end{enumerate}

    \subsubsection{Verification at the \voteCollectionCenter}
    \label{sec:Ver-at-voteCollectionCenter}

    \begin{enumerate}
        \item Verify digital signatures on the stickers, and verify that they  accurately sign data that corresponds to the information on the BB (i.e., with correct $Id_i$, $Id_e$ and $H_i$). (By both election authorities and auditors at the \voteCollectionCenter. Auditors can use their own electronic devices)
        \label{item:dig-sig-county}

        \item Verify that the \textit{Eligibility Problem Stack} includes all (and only) duplicate envelopes from the same voter.  
        \label{item:elig-stack}

        \item Verify that the \textit{Rejected Envelopes Stack} includes all (and only) envelopes with the invalid signatures.
        \label{item:reject-stack}

        \item Verify that $\mathbf{Tally}$ is properly added to the local CVRs or tallies.
        \label{item:tally-added-to-CVRs}
        \item Verify $\DecryptZKP{}{}$  for each ballot that is privately decrypted during the RLA process.
        \label{item:zkp-dec-selection}

        \item If ``not matching'' ballots are handled using the ONEAudit method (See Remark \ref{remark:oneaudit}), verify the aggregation of ``not matching'' ballots and its decryption.
        \label{item:dec-tally-n-m}
        \item Verify that discrepancies are properly determined and then correctly incorporated into the RLA.
        \label{item:disc-RLA-correctness}
        \item Verify there are no  ``serial number error''s.
        \label{item:no-sn-err}
       \item For the case where ballots are collected by a responsible authority (\i.e., \nodelaynostuff\ model), verify if the number of ballots in the ballot box for each ballot style does not exceed the corresponding        quantities on the BB.

        \label{box-num-leq-BB-num}
    \end{enumerate}

    \subsection{Failure handling}\label{subsec:failure-handling}
    In this section, we clarify the procedure that must be adopted if each aforementioned verification fails.
    \subsubsection{Cast-as-intended verification}
    \textbf{Item 1}: If the paper ballot does not match the voter's selections, this could be attributed to a voting machine failure or a mistake by the voter. In such cases, the voter has the option to either retry the voting process or abandon it.

    In either case, the following steps are taken:
    \begin{itemize}
        \item     An authority records the voter's identity and the associated sticker, which is considered invalid from that point onward. If an envelope with an invalidated sticker arrives at the \voteCollectionCenter, it is rejected. This recorded information serves as a reference in case the voter later claims possession of a sticker whose corresponding ballot is not found on the BB.
        \item The invalidated digital ballot is marked on the BB.

    \end{itemize}

    Furthermore, if the voter chooses to retry voting, the voting machine captures the new digital ballot as their vote and generates a corresponding paper ballot and sticker.

    \textbf{Item 2}: If the digital signature on the sticker is not verified, it is the voting machine's  fault.

    \textbf{Item 3}:  If a properly-signed vote, corresponding to the sticker's signature, from this voter does not (eventually) appear on the BB, it is the voting machine's  fault.
    \subsubsection{BB transcript verification (public)}
    \textbf{Items \ref{item:dig-sig} and \ref{item:zkp-proof}}: If for any digital ballot on the BB, digital signature or $\ValidZKP{}$ is not successfully verified,  it is the voting machine's  fault.

    \textbf{Item \ref{item:security-privacy}}: Given the negligible probability that any two ciphertexts are identical, if a ciphertext matches an already published one, it is overwhelmingly likely to be the voting machine's fault (though if
    the duplicated ciphertexts come from two different voting computers, it is not clear which one is at fault, unless
    there is reliable timing information).

    \textbf{Item  \ref{item:zkp-mix}}: If mixing ZKP is not successfully verified, it is the back-end server's fault. The RLA process must be performed after resolving the issue.

    \textbf{Item \ref{item:sn-dec}}: If decryption of each encrypted serial number  is not successfully verified, it is the back-end server's fault.   The RLA process must be performed after resolving the issue.

    \textbf{Item \ref{item:sn-unique}}: Given that it is of negligible probability that any two randomly generated serial numbers are equal, if each serial number is not unique, it is highly likely the voting machine's  fault.

    \textbf{Item  \ref{item:agg-tally}}: If computation of $\enc(\mathbf{Tally})$ is not successfully verified, it is the back-end server's fault. Decryption of $\enc(\mathbf{Tally})$ and then adding the decryption result to tallies from other systems must be performed after resolving the issue.

    \textbf{Item  \ref{item:zkp-tally-dec}}: If decryption of encrypted tally is not successfully verified, it is the back-end server's fault. Decryption of $\enc(\mathbf{Tally})$ and then adding the decryption result to tallies from other systems must be performed after resolving the issue.

    \textbf{Item  \ref{item:stickerupload}}:  For the case where ``Sticker BB Upload'' is used, if there is not a corresponding ballot for each uploaded sticker, it is the voting machine's  fault. 
    \subsubsection{Verification at the \voteCollectionCenter}
    \textbf{Item \ref{item:dig-sig-county}}:
    \begin{itemize}
        \item If the sticker on any envelope carries a valid digital signature from the voting machine, but the corresponding signed data does not appear on the BB, it is the voting machine's fault (unless the sticker has already been invalidated as part of the Cast-as-intended verification in item 1).
        \item If two stickers carry different valid signatures from a voting machine on the same message, that is the voting machine's fault.
        \item If two stickers carry valid signatures from different voting machines on the same voter' data, that is the voter's fault.
        \item If any two stickers contain  a same valid signature by a voting machine, that could be considered a fault of anyone with access to the sticker (e.g., the voting machine, the voter or even someone working in the postal system).
    \end{itemize}

    \textbf{Items \ref{item:elig-stack} and \ref{item:reject-stack}}: Any fault in the envelope stacking process is either due to the fault of the involved authorities or the fault of the device used for the signature verification.

    \textbf{Item \ref{item:tally-added-to-CVRs}}: Any fault is the involved authorities' fault.

    \textbf{Items \ref{item:zkp-dec-selection} and \ref{item:dec-tally-n-m}}: If decryption of any ballot or aggregation of any sets of ballots and then its decryption are not successfully verified, it is the back-end server's fault.

    \textbf{Item \ref{item:disc-RLA-correctness}}: If discrepancies are not correctly determined and then integrated into the RLA computations, it is the involved authorities' and the RLA calculator's fault, respectively.

    \textbf{Item \ref{item:no-sn-err}}: Any ``serial number error'' is the scanning device's fault.

    \textbf{Item \ref{box-num-leq-BB-num}}:  Any fault in counting the number of ballots in the ballot box is the involved authorities' fault.

    \section{Security Proofs}
    \subsection{Verifiability} \label{subsec:verifiability-proof}

We prove the Propositions required for \autoref{coro:main-result}. The real and ideal worlds are given in \autoref{subsec:security-verifiability}.
    This section provides the intermediate worlds and proofs of their equivalence that are used to prove the main verification result,
    Corollary~\ref{coro:main-result}.

    \paragraph{Intermediate World 1} Ballots in this world closely resemble those in the Ideal World. However, the RLA is carried out by randomly sampling from the digital records, identifying their corresponding paper ballots, and subsequently adjusting the RLA statistics in accordance with any discrepancies.

    \paragraph{Intermediate World 2} In this world, the ballots closely resemble those in Intermediate World 1. However, a unique serial number is recorded alongside the voter's unique identifier on digital ballots. Uniqueness of serial numbers published on the BB can be publicly verified. This serial number effectively replaces the voter's unique identifier on the corresponding paper ballot, serving as an alias for the voter. So, the authority in this world does not check if voters' identifiers printed on the paper ballots form a subset of the those recorded on the BB. The authority still checks if the quantity of votes in the ballot box is equivalent to the number of ballots recorded on the BB. The authority also ensures if the identities, published on the BB, are not spoofed. In this world, there is a potential for a malicious voting machine to deviate from the protocol by printing a permutation of serial numbers (which are published on the BB) on the paper ballots. In other words, although the set of all serial numbers published on the BB are the same as those printed on the paper ballots, the serial number that is assigned to voter $i$ on the BB, might be printed on voter $j$'s paper ballot where $i\neq j$. Consequently, the RLA process in this world unfolds as follows: A digital ballot is chosen at random for the RLA process, and its serial number is utilized to locate the corresponding paper ballot. Subsequently, discrepancies are accurately identified and appropriately integrated into the RLA statistics.

    \paragraph{Intermediate World 3} In this world, the ballots closely resemble those in Intermediate World 2. However, dishonest voters in this world might choose not to put their ballot in the ballot box. Also, the authority checks if the quantity of ballots in the box does not exceed the number of ballots on the BB (i.e., Section \ref{sec:Ver-at-voteCollectionCenter}, Verification at the \voteCollectionCenter, item \ref{box-num-leq-BB-num}). Authority still ensures if voters identities, recorded on the BB, are not
    spoofed. Moreover, a malicious voting machine might deviate from the protocol by printing arbitrary, non-matching serial numbers on paper ballots. Consequently, the RLA process in this world unfolds as follows: A digital ballot is randomly selected for RLA and subsequently categorized into one of the following groups, each with its corresponding discrepancy specification, where authorities at the \voteCollectionCenter\ verify this operation:

    \begin{itemize}
        \item `not matching': When the serial number of a digital ballot fails to match any serial number in the paper ballot manifest, the discrepancy for the selected ballot is assigned by employing a worst-case paper assumption for the given ballot.
        \item `one-to-one': When the serial number of a digital ballot matches a single entry in the paper ballot manifest, its discrepancy is computed accordingly and incorporated into the RLA statistics. It's important to note that there is a one-to-one correspondence between `one-to-one' digital ballots and paper ballots.
        \item `duplicated': When the serial number of a digital ballot matches multiple entries in the paper ballot manifest, the discrepancy between the digital record and each of those multiple entries is computed individually. Among these calculated discrepancies, the minimum discrepancy value is selected for inclusion in the RLA statistics. The paper ballot associated with this minimum discrepancy is identified as the corresponding paper ballot for the digital ballot. In cases where multiple paper ballots share the same minimum discrepancy, one of them is randomly chosen to serve as the corresponding paper ballot for the digital ballot.
    \end{itemize}

    \paragraph{Intermediate World 4 (only defined for \nostuff\ model)}  In this world, the voting process and the paper path differ from the previous world as follows:
    \begin{enumerate}
        \item There is a trusted party who honest voters can rely on to guarantee the presence of a ballot corresponding to their unique identifier on the BB,
        \item Voters place their ballot inside an envelope labeled with their ID and send it to the \voteCollectionCenter,
        \item Every paper ballot undergoes the same processing steps as those within \protname{}\ (steps 6 and 7 from the paper path explained in Section \ref{subsec:paperPath}). However, rather than checking the signatures on the stickers, jurisdiction authorities accept any envelope labeled with IDs on the BB, and
        \item The adversarial model is according to \nostuff\ model, defined  in Section \ref{sec:attacker model}.
    \end{enumerate}

    \paragraph{Intermediate World 5} The only distinction between this world and Intermediate World 4 (or Intermediate World 3 in the \nodelaynostuff\ model) is  that the scanning device might malfunction by recording incorrect serial numbers.

    \paragraph{Intermediate World 6} Differences between this world and the Intermediate World 5 are as follows: The serial numbers and the vote contents of digital ballots are initially recorded by the voting machine in an encrypted format. Each digital ballot is also accompanied by a zero-knowledge proof demonstrating the validity of the (encrypted) vote. Subsequently, all digital ballots undergo the same processing steps as those within \protname{}\ (i.e., steps \ref{step1-dig} to \ref{step:lastStepOfElectronicProcess} from the digital path explained in Section \ref{dig-path}) followed by private decryption in step 9.

    \paragraph{Intermediate World 7}
    In this world, instead of depending on a trusted third party, the voter's digital signature on their digital ballot serves as publicly verifiable evidence that the voter's identity is not spoofed.

    \begin{remark}
        Intermediate World 7 is the same as our protocol in the \nodelaynostuff\ model.
    \end{remark}

    Intermediate World $x$ game with $x\in\{1,\cdots,7\}$  is defined similarly to the Ideal World game (outlined in Section \ref{subsec:security-verifiability}), with the distinction that
    it is set in the Intermediate World $x$.

    \begin{prop}
        \label{prop:BMD sig}
        Given that
        \begin{itemize}

            \item successful verification of each digital signature on a sticker by its corresponding voter (i.e., Section \ref{sec:Cast-as-intended verification}, Cast-as-intended verification, item 2),
            \item no reported errors from voters regarding missing votes on the BB (i.e., Section \ref{sec:Cast-as-intended verification}, Cast-as-intended verification, item 3),
        \end{itemize}
        for any adversary $\mathcal{A}$ who wins the Real World Game in the \nostuff\ model, then there exists an adversary $\mathcal{A}'$ who wins the Intermediate World 7 Game.
    \end{prop}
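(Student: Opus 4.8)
The plan is to construct the Intermediate-World-7 adversary $\mathcal{A}'$ by running $\mathcal{A}$ almost verbatim, with one change: since envelope-dropping is unavailable in the \nodelaynostuff\ setting that Intermediate World 7 models, $\mathcal{A}'$ simply lets every envelope that $\mathcal{A}$ would have dropped arrive instead, and otherwise copies all of $\mathcal{A}$'s digital-path behaviour, serial-number choices, and corruptions of the voting computer. The two hypotheses are exactly what make this translation preserve the authentication bookkeeping. Successful verification of each sticker signature by its voter (Cast-as-intended item~2) guarantees that a corrupted voting computer cannot produce a validly-signed sticker whose data fails to correspond to a genuine BB entry, and the absence of any reported missing vote (Cast-as-intended item~3) guarantees that every honest participant's ballot genuinely appears on the BB. Together these pin down the set of digital ballots on the BB, so that the reported total, the number of voters, and the set of honest voters are \emph{identical} across the two executions. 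In particular, the only residual capability distinguishing the \nostuff\ Real World from Intermediate World 7 is the dropping of \emph{honest} voters' envelopes (corrupted voters may withhold paper in either world, producing $\Type=\text{``not matching''}$ records that $\mathcal{A}'$ reproduces exactly).

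I would first fix the adversary's coins and the RLA's random sample and argue the correspondence sample-by-sample. Because the digital path is untouched by dropping, the encrypted records, the tally of Step~\ref{step:digitalDecryptedTally}, and the mixed serial numbers coincide in the two worlds; hence the apparent outcome (from the CVRs) is unchanged, and the actual outcome (determined by the voter-verified papers, which encode the same intentions regardless of whether an envelope is physically delivered) is unchanged as well. The ``apparent $\neq$ actual'' winning clause therefore transfers directly from $\mathcal{A}$ to $\mathcal{A}'$.

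The heart of the argument is a monotonicity comparison of discrepancies. Whenever $\mathcal{A}$ drops an honest voter's envelope, the corresponding digital record is classified $\Type=\text{``not matching''}$ and assigned the \emph{maximum} possible discrepancy under the worst-case paper assumption. When $\mathcal{A}'$ instead delivers that envelope, the paper re-enters the matching pool: if some digital record shares its scanned serial, that record's matching set grows and its discrepancy (a minimum over matching papers) weakly decreases, possibly shifting it from ``not matching'' to ``one-to-one'' or from ``one-to-one'' to $\Type=\text{``duplicated''}$; otherwise the paper lands in the paper-only stack and affects no record. Thus adding papers can only weakly decrease discrepancies, so every discrepancy in the Intermediate-World-7 execution is no larger than its Real-World counterpart. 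By monotonicity of the RLA functionality (Definition~\ref{def:monotonic} and Remark~\ref{rem:phantom-remark}), decreasing discrepancies with the sample held fixed cannot turn an ``accept'' into a ``reject''; hence, for every fixed sample, acceptance in the \nostuff\ world implies acceptance in Intermediate World 7, and the acceptance probability only increases. Combining this with the preserved ``apparent $\neq$ actual'' clause (and invoking Remark~\ref{RL remark}) shows that $\mathcal{A}'$ wins the Intermediate World 7 game whenever $\mathcal{A}$ wins the Real World game.

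The step I expect to be the main obstacle is rigorously establishing the \emph{monotone-in-papers} property underlying the reduction: one must verify that withholding a paper \emph{never} decreases any digital record's discrepancy, so that the worst-case ``not matching'' assignment truly dominates the delivered-paper case. This requires, first, using the two sticker-signature hypotheses to rule out the possibility that a dropped envelope silently alters the voter count or the membership of the honest set (which would break the correspondence rather than merely shifting a discrepancy), and second, handling the $\Type=\text{``duplicated''}$ bookkeeping carefully so that the per-record minimum lines up correctly across the two worlds even when the adversarial voting computer prints mismatched or colliding serial numbers. Once this property is in hand, the monotonicity invocation closes the argument.
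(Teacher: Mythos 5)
Your proof targets the wrong transition, because it misidentifies what Intermediate World 7 is. In the paper's chain for the \nostuff{} model, Intermediate World 7 inherits its adversarial model from Intermediate World 4 (whose item 4 explicitly retains the \nostuff{} adversary), so envelope dropping is \emph{still available} in Intermediate World 7; the remark that Intermediate World 7 ``is the same as our protocol in the \nodelaynostuff{} model'' refers to the other instantiation of the chain, where Intermediate World 4 is skipped. The only difference between the \nostuff{} Real World and Intermediate World 7 is the mechanism that prevents \emph{digital} ballot drops: in Intermediate World 7 a trusted party guarantees that each honest voter's ballot appears on the BB, whereas in the Real World this guarantee comes from the voting computer's signature on the sticker together with the voter's checks (Cast-as-intended items 2 and 3). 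Accordingly, the paper's proof is a direct simulation: by unforgeability of the voting computer's signature, a validly verified sticker commits the machine to publishing the signed digital record, and the absence of reported missing-vote errors means every honest voter's ballot is on the BB and their envelope, \emph{once received}, is accepted at the \localVotingCenter; the Intermediate World 7 adversary $\mathcal{A}'$ then replicates $\mathcal{A}$ verbatim --- including all of its envelope drops --- producing identical digital and paper ballots, identical discrepancies for every audit sample, and hence an identical RLA output. You do state the correct consequence of the two hypotheses (honest ballots are on the BB), but you then build the reduction around ``delivering every envelope that $\mathcal{A}$ would have dropped,'' which is a change nowhere required or permitted by this step.

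The drop-removal-plus-monotonicity machinery you develop is, in the paper's decomposition, the content of Proposition~\ref{prop:paper-path-4-to-3} (Intermediate World 4 to 3), and the paper handles it differently and more safely: rather than delivering dropped envelopes, it re-labels the never-received papers with serial numbers matching nothing on the BB, which keeps every discrepancy exactly equal across the two worlds instead of merely bounded. Your version also has a substantive risk on its own terms: the winning condition requires ``apparent outcome $\neq$ actual outcome,'' and if the actual outcome is computed from the paper ballots present in the audit, then delivering previously dropped papers can change the actual outcome and destroy the win --- your one-line assertion that the actual outcome is ``unchanged'' silently assumes dropped papers already count, which is precisely the modeling question the paper's construction avoids. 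Finally, note that even under your (incorrect) reading, the weak-decrease claim needs the paper-only stack and the `duplicated' minimum-discrepancy bookkeeping to interact correctly with the manifest used for sampling, which you flag but do not resolve; in the paper's actual proof of this proposition none of that arises, because nothing about the paper path changes between the two worlds.
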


    \begin{proof}
        The only distinction between the Real World in the \nostuff\ model and the Intermediate World 7 is the method used to prevent digital ballot drops. In the Intermediate World 7, a trusted party ensures that voters' ballots are present on the BB. In the Real World, this assurance is achieved through the voting computer's signature on the sticker. So to prove the proposition, we first demonstrate that the voting machine's digital signature $\auth_{\textit{Voting Computer}} (Id_i,Id_e,H_i)$, printed on the sticker which is provided to a voter with unique identifier $Id_i$, proves  the voting machine's commitment  to publish a digital ballot with the hash value $H_i$ for a voter with the same identifier $Id_i$ in an election with the same election identifier $Id_e$ on the BB. 
        
        Consider the opposite in the Real World: a scenario where $\auth_{\textit{Voting Computer}} (Id_i,Id_e,H_i)$ does not prove the voting machine's commitment to publish the corresponding digital ballot on the BB. The private key of the voting machine is securely held by the voting machine. If, however, anyone manages to generate a valid digital signature without the voting machine's involvement and hence without access to their private key, it would compromise the security of the digital signature system. Such a situation contradicts our initial assumption regarding the security of the cryptographic primitives employed in the system. Therefore, since honest voters verify the signature on the sticker and they report no errors, the ballots of honest parties are recorded on the BB. Also, due to having a valid sticker, their envelopes, once received, will be accepted at the \localVotingCenter. If the serial numbers and vote selections on the paper and digital ballots in the Intermediate World 7 match those in the Real World, then for any sequence of ballots selected for auditing, the number of discrepancies detected by the audit in the Real World will be equal to that in the Intermediate World 7. Consequently, the RLA output in both worlds would be the same.

    \end{proof}

    \begin{prop}
        \label{prop:exclusion}
        Given that all digital signatures on the BB are successfully verified  (i.e., Section \ref{sec:BB-public-ver}, BB transcript verification (public), item \ref{item:dig-sig}),  for any adversary $\mathcal{A}$ who wins the Intermediate World 7 Game, then there exists an adversary $\mathcal{A}'$ who wins the Intermediate World 6 Game.

    \end{prop}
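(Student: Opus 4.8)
The plan is to exploit the fact that Intermediate World 7 and Intermediate World 6 differ in exactly one respect: the guarantee that an honest voter's identifier on the BB is not spoofed is supplied by a trusted third party in World 6, whereas in World 7 it is supplied by the voter's own CAC digital signature, checked via item~\ref{item:dig-sig}. In every other respect---the encrypted ballots, the validity ZKPs, the digital path of Section~\ref{dig-path}, and the private decryption in step~9---the two worlds coincide. I would therefore aim to show that any World~7 execution maps to a World~6 execution whose digital and paper ballot sets are \emph{literally identical}, so that the first of the two bullet conditions of Section~\ref{subsec:security-verifiability} (digital ballots form a permutation, here the identity) holds trivially and no discrepancy comparison is required.

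The crux is a single claim: under the hypothesis that every signature on the BB verifies, and assuming the CAC signature scheme is existentially unforgeable, a World~7 adversary cannot place on the BB a validly signed ballot carrying the identifier of an honest voter who never inserted their card. This is the same style of argument as in \autoref{prop:BMD sig}: producing a valid signature under an honest, absent voter's key without that voter's participation would contradict the security of the signature scheme. A forgery reduction would plant the challenger's public key as the CAC key of a guessed honest voter; signatures the adversary legitimately obtains while that voter is present (card inserted) are answered through the signing oracle, and any additional valid signature under that voter's identifier becomes a forgery. The upshot is that the honest identifiers appearing on the BB in World~7 are exactly those of participating honest voters---precisely the set the trusted party vouches for in World~6.

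Given this, I would build $\mathcal{A}'$ by running $\mathcal{A}$ internally and forwarding its actions into the World~6 environment, letting World~6's trusted party certify the same identifiers that $\mathcal{A}$'s verified signatures certify in World~7. The adversary controls the corrupted voters' ballots outright in both worlds, so those agree; by the claim above, the honest-voter identifiers agree except with negligible probability. Hence the two worlds contain identical digital and paper ballots, which forces identical reported totals, the same number of voters, the same set of honest voters, and the same RLA input. The RLA therefore returns the same verdict, so $\mathcal{A}'$ wins the Intermediate World~6 Game whenever $\mathcal{A}$ wins the Intermediate World~7 Game, up to the negligible forgery term.

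I expect the main obstacle to be the reduction underlying the claim, specifically disentangling the two kinds of signing power the adversary possesses. Because the adversary controls every voting computer, it may legitimately generate a signature under an honest voter's key exactly when that voter is physically present; the reduction must route precisely those queries through the oracle while treating any valid signature for an honest, \emph{absent} voter as a genuine forgery, absorbing the usual polynomial guessing loss over which voter is impersonated. Once the ballot sets are shown to coincide, everything downstream is immediate.
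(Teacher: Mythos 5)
Your proposal is correct and follows essentially the same route as the paper's proof: identify spoofing of an absent honest voter's identifier as the sole difference between the two worlds, reduce any such spoofing to a break of the CAC signature scheme, and then map the execution to a World~6 run with identical digital and paper ballots so that every sampled sequence yields the same discrepancies and the same RLA verdict. Your explicit EUF-CMA reduction---routing signatures obtained while the voter's card is present through the signing oracle and treating signatures for absent voters as forgeries---is a more rigorous rendering of what the paper states only informally as ``such a situation contradicts our initial assumption regarding the security of the cryptographic primitives.''
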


    \begin{proof}
        The only distinction between the Intermediate World 7 and the Intermediate World 6 is the method used to prevent digital ballot stuffing. In the Intermediate World 6, a trusted party ensures that identities recorded on the BB are not spoofed. But in the Intermediate World 7, the voter's digital signature on their ballot achieves this goal. So, we first demonstrate that the voter's digital signature proves that their identity is not spoofed. 
        
        Consider the opposite in the Intermediate World 7: a scenario where a voter's identity on the BB is spoofed. The private key corresponding to each voter's digital signature is securely stored in their smart card, accessible only to the voter. If, however, the voting machine manages to generate a valid digital signature on the BB without the voter actively attempting to cast a vote and hence without access to their smart card, it would compromise the security of the digital signature system. Such a situation contradicts our initial assumption regarding the security of the cryptographic primitives employed in the system. Therefore, if all digital signatures on the BB are successfully verified, voters' identities are not spoofed. Moreover, if the serial numbers and vote selections on the paper and digital ballots in the Intermediate World 6 match those in the Intermediate World 7, then for any sequence of ballots selected for auditing, the number of discrepancies detected by the audit in the Intermediate World 7 will be equal to that in the Intermediate World 6. Consequently, the RLA output in both worlds would be the same.
    \end{proof}

    \begin{prop}
        \label{prop:enc-mix}
        Given that correctness of
        \begin{itemize}
            \item all $\ValidZKP{}$ proofs (i.e., Section \ref{sec:BB-public-ver}, BB transcript verification (public), item \ref{item:zkp-proof}),
            \item all $\DecryptZKP{}{}$ proofs including:
            \begin{itemize}
                \item decryption of all serial numbers (i.e., Section \ref{sec:BB-public-ver}, BB transcript verification (public), item \ref{item:sn-dec}),
                \item decryption of vote selections for ballots that are selected for the RLA process  (i.e., Section \ref{sec:Ver-at-voteCollectionCenter}, Verification at the \voteCollectionCenter, item \ref{item:zkp-dec-selection}), and
                \item decryption of the encrypted tally obtained at step \ref{step:digitalDecryptedTally} of the digital path (i.e., Section \ref{sec:BB-public-ver}, BB transcript verification (public), item \ref{item:zkp-tally-dec})
            \end{itemize}
            \item $\MixZKP{}{}$ proof (i.e., Section \ref{sec:BB-public-ver}, BB transcript verification (public), item \ref{item:zkp-mix}), and
            \item aggregation of encrypted ballots at step \ref{step:digitalEncryptedTally} of the digital path (i.e., Section \ref{sec:BB-public-ver}, BB transcript verification (public), item \ref{item:agg-tally})
        \end{itemize}
        are successfully verified, for any adversary $\mathcal{A}$ who wins the Intermediate World 6 Game, then there exists an adversary $\mathcal{A}'$ who wins the Intermediate World 5 Game.
    \end{prop}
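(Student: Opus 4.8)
The plan is to give a reduction that turns any adversary $\mathcal{A}$ winning the Intermediate World 6 Game into an adversary $\mathcal{A}'$ winning the Intermediate World 5 Game. The two worlds differ only in \emph{how} the digital records are represented and processed: World 5 keeps serial numbers and votes in plaintext and manipulates them directly, whereas World 6 commits to them as ElectionGuard ciphertexts and runs the full digital path (steps \ref{step1-dig}--\ref{step:lastStepOfElectronicProcess}) followed by private decryption in step 9. So the task reduces to showing that, conditioned on every listed proof verifying, the cryptographic pipeline in World 6 computes exactly the plaintext data that World 5 operates on, up to a permutation of the digital ballots.

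First I would establish this data-equivalence claim by invoking soundness of each verified proof in turn. Soundness of $\ValidZKP{}$ guarantees that each committed ciphertext decrypts to a valid vote vector, so World 6's ballots correspond to legitimate World 5 ballots. Correctness of the homomorphic aggregation at step \ref{step:digitalEncryptedTally}, together with soundness of $\DecryptZKP{\mathbf{Tally}}{\enc(\mathbf{Tally})}$, forces the published tally to equal the sum of those plaintext votes, matching the reported total in World 5. Soundness of $\MixZKP{\mathcal{D}_2}{\mathcal{D}_3}$ forces $\mathcal{D}_3$ to be a re-encryption of a permutation of $\mathcal{D}_2$, so the multiset of $(sn,\mathbf{b})$ pairs is preserved by the mix; soundness of the serial-number decryption proofs then forces each revealed $K^{sn_j}$ to equal the committed value, and soundness of the per-ballot vote-decryption proofs used during the RLA forces each privately decrypted $\mathbf{b_j}$ to equal the plaintext committed in its ciphertext. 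Taken together, these imply that the set of $(sn,\mathbf{b})$ pairs on which World 6's RLA operates is identical to World 5's, differing only by the (secret) mixing permutation.

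Given this, $\mathcal{A}'$ simulates World 6's honest infrastructure for $\mathcal{A}$ while itself playing in World 5: it keeps the same partition of honest and corrupted voters and the same vote selections, recovers the plaintext $(sn,\mathbf{b})$ pairs from the committed ciphertexts via the sound proofs (equivalently, by holding the decryption trapdoor in its own simulation), and feeds the corresponding plaintext ballots to World 5's voting machine and paper path. Because the RLA matches sampled digital records to paper ballots \emph{by serial number} rather than by position, and the serial numbers and vote contents coincide, the discrepancy assigned to each sampled ballot---and hence the accept/reject decision for any fixed random sample---is the same in both worlds. Consequently the RLA accepts a wrong outcome in World 5 with the same probability (up to negligible soundness error) as in World 6, so $\mathcal{A}'$ wins whenever $\mathcal{A}$ does.

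I expect the main obstacle to be the mixnet step: I must argue that the permutation introduced by mixing does not change the RLA's acceptance probability. This is exactly the first invariant from \autoref{subsec:security-verifiability}---the digital ballots in the new world are a permutation of those in the present world---and it holds because the RLA samples uniformly and matches by serial number, so permuting the records leaves the distribution of audited discrepancies, and thus the accept probability, unchanged (Remark~\ref{RL remark}). A secondary subtlety is that each verification guarantees its underlying relation only up to a negligible soundness error, so the reduction loses a negligible amount; this is harmless because the games compare the accept probability against a threshold that is \emph{non-negligibly} greater than $\alpha$.
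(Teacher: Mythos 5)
Your proposal is correct and follows essentially the same route as the paper's proof: invoke soundness of $\ValidZKP{}$, $\MixZKP{}{}$, the homomorphic aggregation, and the various $\DecryptZKP{}{}$ instances to conclude that the plaintext $(sn,\mathbf{b})$ pairs underlying World~6's pipeline coincide with World~5's ballots up to the mixing permutation $\pi$, and then map audit sequences through $\pi^{-1}$ to preserve discrepancies and hence the RLA acceptance probability. Your two added touches---the explicit argument that serial-number-based matching makes the permutation harmless, and the accounting of negligible soundness error against the ``non-negligibly greater than $\alpha$'' threshold in the game definition---are left implicit in the paper but are consistent with it.
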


    \begin{proof}

        Let $\mathcal{B}$ denote the list of digital ballots  recorded on the BB for $N$ voters before initiation of the step \ref{step1-dig} in the Intermediate World 6:
        $$\mathcal{B}=\{(\enc(sn_i),\enc(\mathbf{b_i}):i\in\{1,\cdots,N\}\}\}$$
        By relying on the security guarantees provided by the $\ValidZKP{}$ protocol, the validation of $\ValidZKP{}$ for all ballots ensures that, for any $i \in \{1, \cdots, N\}$, $\mathbf{b_i}$ represents a valid vote. This crucial property will later enable us to use the same vote selections in the Intermediate World 5. After applying the mixnet, the list transforms to $\mathcal{B}'$:
        $$\mathcal{B}'=\{(\enc(sn'_j),\enc(\mathbf{b'_j})):j=\pi(i),i\in\{1,\cdots,N'\}\}\}$$
        Due to validity of $\MixZKP{}{}$, we have $N'=N$, $sn'_j=sn_i$, $\mathbf{b'_j}=\mathbf{b_i}$ for any $i\in\{1,\cdots,N\}$ with $\pi$ denoting a permutation from $\{1,\cdots,N\}$ to $\{1,\cdots,N\}$. Deviation from these conditions would violate the mixnet security assumptions. Some digital ballots in $\mathcal{B}'$ are then privately decrypted and used for the RLA process. Let digital ballots after private decryption be denoted by $\mathcal{B}''$:$$\mathcal{B}''=\{(sn''_j,\mathbf{b''_j}):j=\pi(i),i\in\{1,\cdots,N\}\}\}$$
        The validity of $\DecryptZKP{}{}$ ensures $sn''_j = sn'_j$ and $\mathbf{b''_j} = \mathbf{b'_j}$ (and hence $sn''_j = sn_i$ and $\mathbf{b''_j} = \mathbf{b_i}$), with $j = \pi(i)$. Otherwise, the security of the proof of decryption protocol would be violated.

        Now let  $$\mathcal{B}'''=\{(sn_i,\mathbf{b_i}):i\in\{1,\cdots,N\}\}\}$$
        be the digital ballots recorded by the voting machine on the BB in Intermediate World 5. The tally in Intermediate World 5 ($\sum_{i=1}^{N}\mathbf{b_i}$) matches the result of decryption at step \ref{step:digitalEncryptedTally} of the protocol in the Intermediate World 6 ($\dec(\prod_{j=1}^{N}\enc(\mathbf{b'_j}))$). This equivalence is due to the homomorphic properties of the encryption algorithm. The private decryption at step \ref{step:digitalDecryptedTally} of the protocol in the Intermediate World 6 results in $\sum_{j=1}^{N}\mathbf{b'_j}$. Otherwise the security of the proof of decryption protocol would be violated. Since the paper path in both worlds is the same, the paper manifest in Intermediate World 5 can be selected to match that in Intermediate World 6. Each sequence of ballots in Intermediate World 6 corresponds to a sequence in Intermediate World 5 (established through the $\pi^{-1}$ permutation function), maintaining identical discrepancy values. Consistency in both discrepancy values and tally across these intermediate worlds ensures an equivalent RLA risk.
    \end{proof}

    \begin{prop}
        \label{prop:mal-scan}
        Given that no ``serial number error'' is observed during the RLA (i.e., Section \ref{sec:Ver-at-voteCollectionCenter}, Verification at the \voteCollectionCenter, item \ref{item:no-sn-err}),  for any adversary $\mathcal{A}$ who wins the Intermediate World 5 Game,  there exists an adversary $\mathcal{A}'$ who wins the Intermediate World 4 Game (or Intermediate World 3 Game for the \nodelaynostuff\ model).
    \end{prop}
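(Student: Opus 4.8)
The plan is to keep every component of the two worlds identical except the scanner, and to argue that a dishonest scanner can only (weakly) \emph{increase} the discrepancy of any audited ballot, so that acceptance transfers by monotonicity. First I would observe that the scanner in step~7 touches neither the digital path nor the voting process: the reported tally, the number of voters, and the set of honest voters are all determined before step~7 and are therefore identical in World~5 and World~4 (resp.\ World~3 in the \Eonly\ model). Hence, given $\mathcal{A}$ for World~5, I let $\mathcal{A}'$ reproduce exactly the same corrupted voters, voting-computer outputs, printed serial numbers, envelope drops, and physical ballots with their storage positions; the honest scanner of World~4 then records the \emph{true} serial-to-position map and the true $\Type$ for each digital ballot, and the announced outcome is wrong in World~4 iff it is wrong in World~5.

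The crux is the discrepancy comparison for each \emph{sampled} ballot, conditioned on the hypothesis that no ``serial number error'' is observed (item~\ref{item:no-sn-err}). The key leverage is that the RLA re-reads the serial number off the paper at each recorded position and aborts on a mismatch; consequently every matching position that the dishonest scanner records for a sampled ballot must carry a genuinely matching paper, and the set of genuinely matching papers is fixed by the (adversarially printed, but thereafter immutable) serial numbers. I would argue by cases on the honest $\Type$ of a sampled digital ballot with true serial $sn_j$. If it is honestly \emph{not matching} (no paper carries $sn_j$), the dishonest scanner cannot manufacture any valid match without triggering a serial-number error, so it too yields the worst-case discrepancy---equal. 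If it is honestly \emph{one-to-one} or \emph{duplicated}, then every position the dishonest scanner reports lies in the fixed set of genuine matches, so the reported set is a \emph{subset} of the honest one; since the discrepancy for a matched ballot is the minimum over reported matches, pruning matches can only raise this minimum, and mislabelling the ballot as ``not matching'' raises it to the worst case. In every case the World~4 (honest-scanner) discrepancy is at most the World~5 (dishonest-scanner) discrepancy, while an honest scanner never produces a serial-number error.

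Finally I would invoke monotonicity (Definition~\ref{def:monotonic}, Remark~\ref{rem:phantom-remark}) together with Remark~\ref{RL remark}: fixing the adversary's ballots and the RLA sampling randomness, every sample sequence that drives the RLA to accept the wrong outcome in World~5 drives it to accept in World~4, because each sampled discrepancy is no larger and the tally and voter counts agree. Therefore $\Pr[\text{accept in World 4}]\ge\Pr[\text{accept in World 5}]>\alpha$, and $\mathcal{A}'$ wins the World~4 game (and identically the World~3 game in the \Eonly\ setting, since there too the scanner is the sole difference). The main obstacle is the discrepancy inequality in the \emph{duplicated} case: I must be careful that the serial-number check genuinely pins the reported positions to the true matching set, so that taking a minimum over a possibly smaller reported set cannot undershoot the honest minimum, and that the adversary cannot lower a discrepancy by fabricating a spurious low-discrepancy match.
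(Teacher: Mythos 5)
Your proposal is correct and takes essentially the same route as the paper's proof: both hinge on the observation that, conditioned on no ``serial number error'', any below-worst-case discrepancy recorded in Intermediate World 5 is witnessed by a genuinely matching paper ballot that also exists in Intermediate World 4, so each sampled ballot's discrepancy in World 4 is at most its World 5 counterpart and monotonicity transfers acceptance. Your explicit case analysis (including the subset-of-genuine-matches argument for `duplicated' ballots) spells out in detail what the paper's one-paragraph argument states tersely.
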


    \begin{proof}

        In Intermediate World 5, when a digital ballot is chosen for the RLA, its discrepancy is  set to the maximum possible value for that ballot unless a corresponding paper ballot with the same serial number and a lower discrepancy is identified in the paper record. If such a paper ballot exists in Intermediate World 5, it implies the coexistence of the same digital and paper ballot in Intermediate World 4. Consequently, the discrepancy for a digital ballot in Intermediate World 4 will not be less than its corresponding discrepancy value in Intermediate World 5.

    \end{proof}

    \begin{prop}
        \label{prop:paper-path-4-to-3}
        Given  that \textit{Eligibility Problem} Stack includes all (and only) duplicate envelopes from the same voter (i.e., Section \ref{sec:Ver-at-voteCollectionCenter}, Verification at the \voteCollectionCenter, item \ref{item:elig-stack}), \textit{Rejected Envelopes} Stack includes all (and only) rejected envelopes (i.e., Section \ref{sec:Ver-at-voteCollectionCenter}, Verification at the \voteCollectionCenter, item \ref{item:reject-stack}), for any adversary $\mathcal{A}$ who wins the Intermediate World 4 Game, then there exists an adversary $\mathcal{A}'$ who wins the Intermediate World 3 Game.
    \end{prop}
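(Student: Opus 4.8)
The plan is to exhibit, for any adversary $\mathcal{A}$ winning the Intermediate World~4 game, an adversary $\mathcal{A}'$ that copies $\mathcal{A}$'s digital behaviour verbatim and merely re-packages the paper side. The only difference between the two worlds is the paper-handling layer: Intermediate World~4 mails each ballot inside an ID-labelled envelope, runs the step~6/7 processing (with acceptance keyed on ``ID on the BB'' rather than on a signature) and permits envelope drops in the \nostuff\ model, whereas Intermediate World~3 uses a ballot box guarded only by the count check of item~\ref{box-num-leq-BB-num}. Because $\mathcal{A}'$ reproduces the bulletin board exactly, the reported totals, the number of voters $N$, and the set of honest voters are identical across the two worlds, which already discharges requirements (1)--(3) of the meta-strategy of \autoref{subsec:security-verifiability}. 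What remains is to construct the ballot-box contents so that the discrepancy of every sampled digital ballot in Intermediate World~3 is no larger than in Intermediate World~4, which is the second bullet of that strategy.

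I would define the box as follows: every honest voter deposits their true paper (as honesty forces), and every dishonest voter deposits exactly the paper that their envelope contributed to the \emph{accepted} pile in Intermediate World~4, depositing nothing if their envelope was dropped, rejected, or set aside. Write $A$ for the multiset of papers accepted (physically shuffled in) in Intermediate World~4. The central claim is that the box is a superset of $A$ and still passes the count check. Superset: each accepted paper comes from a single envelope; an honest sender's envelope holds that voter's true paper, already in the box, while a dishonest sender's accepted paper is placed in the box by construction. Count: the decisive ingredient is that the \nostuff\ model forbids stuffing, so each voter mails at most one envelope, and---by the hypotheses on the two stacks (items~\ref{item:elig-stack} and~\ref{item:reject-stack})---step~6 retains at most one accepted envelope per ID and accepts only IDs on the BB. Hence the map sending each accepted paper to the voter who physically mailed it is injective, so the dishonest senders occurring in $A$ number at most the $d$ dishonest voters, and the box size is $h + |A_d| \le h + d = N$, where $h$ is the number of honest voters and $A_d \subseteq A$ the accepted papers from dishonest senders.

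With box $\supseteq A$ in hand, the discrepancy comparison is immediate from step~9: a sampled digital ballot is matched by its decrypted serial number, its discrepancy being the maximum value when no paper matches (\Type$=$``not matching'') and otherwise the minimum over all matching papers (\Type$=$``one-to-one'' or ``duplicated''). Enlarging the pool from $A$ to the Intermediate World~3 box can only turn a non-match into a match or add further matches, so for every serial number the Intermediate World~3 set of matching papers contains the Intermediate World~4 set; a maximum can only drop to a realised discrepancy, and a minimum over a larger set can only decrease. Thus each sampled ballot's Intermediate World~3 discrepancy is at most its Intermediate World~4 discrepancy. Coupling the two games on the same random sample (legitimate, since the digital records and hence the sampling pool coincide) and invoking monotonicity (\autoref{rem:phantom-remark}), any sample that makes the RLA accept the wrong outcome in Intermediate World~4 also makes it accept in Intermediate World~3, so $\mathcal{A}'$ wins whenever $\mathcal{A}$ does.

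The step I expect to be the main obstacle is the count-check argument: one must be sure that forcing every honest voter's true paper into the box while \emph{also} reproducing all dishonest senders' accepted papers cannot overflow $N$. The subtle case is an honest voter whose own envelope is displaced into the Eligibility Problem stack by a duplicate that a malicious machine labelled with that voter's ID; there Intermediate World~4 counts the intruding paper in the honest voter's ID-slot, yet Intermediate World~3 must still seat the honest voter's genuine paper. The resolution is that the intruding envelope was physically mailed by a \emph{distinct} voter, so under the no-stuffing restriction it consumes a separate seat and injectivity of the physical-sender map keeps the total within $N$; the ``all and only'' correctness hypotheses on the two envelope stacks are exactly what license this bookkeeping. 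I would also record that adding papers never lowers the audit's chance of escalating to a hand count, so the honest papers that Intermediate World~3 seats beyond Intermediate World~4's accepted pile are harmless for the inequality we need.
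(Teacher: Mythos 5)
There is a genuine gap, and it sits exactly where your construction departs from the paper's. The paper's $\mathcal{A}'$ keeps \emph{every} paper ballot that exists in Intermediate World~4 in the Intermediate World~3 ballot box---including the dropped and set-aside ones---and uses the World-3 adversary's license to print arbitrary serial numbers in order to relabel precisely the never-received papers with serial numbers matching nothing on the BB. This preserves both the multiset of paper vote contents and the \emph{exact} per-ballot discrepancy of every digital record, so the RLA's acceptance probability and, crucially, the wrongness of the announced outcome carry over verbatim (no appeal to monotonicity is needed for this step). Your construction instead deletes dishonest voters' dropped/rejected papers from the box while seating every honest voter's true paper with its live serial number. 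Deleting those papers changes the multiset of paper ballots, and winning the game is a conjunction: (i) the apparent outcome differs from the actual outcome according to the paper ballots, and (ii) the audit accepts with probability non-negligibly above $\alpha$. Your discrepancy-domination-plus-monotonicity argument addresses only (ii). If $\mathcal{A}$'s World-4 wrongness hinges on dishonest voters' paper votes that the adversary itself dropped in the mail (say, papers for the true winner whose digital records were flipped), your box omits exactly those papers, the box tally can then agree with the apparent outcome, and your $\mathcal{A}'$ fails conjunct (i) even though the audit still accepts. The requirements (1)--(3) you verify (reported totals, voter count, honest voters) do not imply (i); the paper's content-preserving serial-number relabeling is what does.

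Two further comparisons. First, your count identity $h+d=N$ is only correct if $d$ is read as the number of dishonest voters who actually produced a BB record (a non-participating dishonest voter has no ballot and, under the no-stuffing restriction of the \nostuff{} model, no envelope to mail); with that reading your injectivity argument passes the check of item~\ref{box-num-leq-BB-num}, so this is repairable wording rather than a flaw---and your explicit treatment of the honest voter displaced into the Eligibility Problem stack by a spoofed duplicate is a legitimate edge case that the paper's own proof glosses over. Second, keeping live serial numbers on honest dropped papers is what forces you into the weaker ``discrepancy can only decrease'' statement and hence into invoking Remark~\ref{rem:phantom-remark}; that part of your argument is sound, but it buys you nothing once the papers are relabeled. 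Replacing your box with the paper's---all World-4 papers deposited, with non-matching serial numbers printed on every paper that was not accepted into the shuffled pool---simultaneously repairs conjunct (i), restores exact equality of discrepancies, and removes the monotonicity appeal from this step entirely.
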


    \begin{proof}

        Let's start with the setup in the Intermediate World 4. Let $N$ denote the total digital ballots on the BB. Assume the number of honest voters be denoted by $n$. Since there is a trusted party who honest voters can rely on to guarantee the presence of their ballots on the BB, $n$ digital records on the BB correspond to these honest voters. So, we have $N\geq n$.  In  the \nostuff\ model, although the envelopes mailed by honest voters might drop, the adversary is unable to stuff new envelopes on behalf of honest voters. Furthermore, the envelope that each honest voter mails, once received,  remains unaltered.  Therefore, paper ballots inside envelopes received from honest voters at the \voteCollectionCenter\ in the Intermediate World 4, represented by $\mathcal{P}'_h$, form a subset of paper ballots viewed and mailed by the honest voters, represented by $\mathcal{P}_h$. Similarly, $\mathcal{P}_{\bar{h}}$ and $\mathcal{P}'_{\bar{h}}$ denote the paper ballots mailed by dishonest voters and received at the \voteCollectionCenter, respectively.

        Let the number of voters and the honest voters in the Intermediate World 3 be the same as in the Intermediate World 4 (i.e., $N$ and $n$, respectively) with the same digital records  recorded on the BB. Assume that the paper ballots provided to the voters in the Intermediate World 3 are identical to those in the Intermediate World 4 with one distinction: The serial numbers printed on $\mathcal{P}_h\backslash\mathcal{P}'_h$ and $\mathcal{P}_{\bar{h}}\backslash\mathcal{P}'_{\bar{h}}$ do not match any serial number on the BB. Figure \ref{fig:proof-4-3} demonstrates these setups.

        \begin{figure}
            \begin{center}

                \includegraphics[scale=0.35]{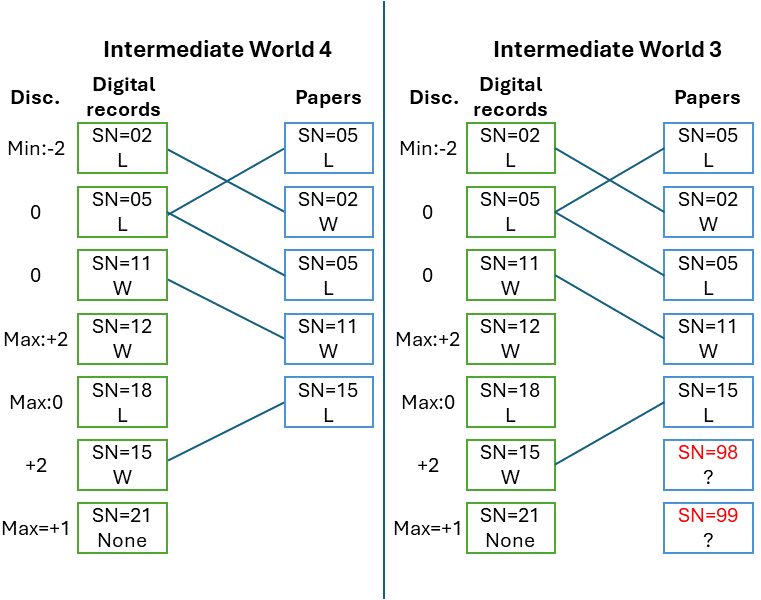}
                \caption{Building ballots in Intermediate World 3 based on those in Intermediate World 4.}
                \label{fig:proof-4-3}
            \end{center}

        \end{figure}

        Then, these paper ballots do not match any records on the BB, and therefore, their presence in Intermediate World 3 does not affect any discrepancy values. Consequently, the discrepancy of each digital record in Intermediate World 3 will be equal to its corresponding ballot in Intermediate World 4. This shows for any adversary $\mathcal{A}$ who wins the Intermediate World 4 Game, there exists an adversary $\mathcal{A}'$ who wins the Intermediate World 3 Game

    \end{proof}

    \begin{prop}
        \label{prop:paper-path-3-to-2}
        For any adversary $\mathcal{A}$ who wins the Intermediate World 3 Game, then there exists an adversary $\mathcal{A}'$ who wins the Intermediate World 2 Game.
    \end{prop}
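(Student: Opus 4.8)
The plan is to reduce a winning Intermediate World 3 adversary $\mathcal{A}$ to a winning Intermediate World 2 adversary $\mathcal{A}'$ by reusing $\mathcal{A}$'s digital ballots verbatim and rebuilding only the paper record so that it meets the stronger World 2 constraints---every serial number on the BB appears on exactly one paper ballot, and the number of papers equals the number of BB records---without raising any discrepancy. First I would let $\mathcal{A}'$ run $\mathcal{A}$ internally and copy every digital ballot onto the BB, together with the same partition into honest and dishonest voters. This immediately fixes the reported totals, the number of voters $\numvoters$, and the set of honest voters, so three of the four invariants of \autoref{subsec:security-verifiability} hold by construction; all that remains is the per-ballot discrepancy bound.

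The core of the argument is the construction of the World 2 paper manifest from the World 3 one. Working over the $\numvoters$ decrypted BB serials, I would set the World 2 paper attached to $sn_i$ to be: the matching paper itself when the World 3 \Type\ of $sn_i$ is ``one-to-one''; the single minimum-discrepancy paper when the \Type\ is ``duplicated''; and a worst-case paper when the \Type\ is ``not matching''. This yields exactly $\numvoters$ papers whose serials are a permutation of the BB serials, as World 2 demands, and by the World 3 discrepancy rule (minimum over duplicates, worst case for phantoms) each digital ballot receives in World 2 exactly the discrepancy it had in World 3. The extra bookkeeping is that World 2 requires \emph{every} honest voter's verified paper to be present. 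An honest paper is already placed if it realised its serial's one-to-one or minimum value; any honest paper that was a non-minimal duplicate or lay in the ``paper-only'' stack is surplus and must be re-homed. I would re-home each such surplus honest paper onto a ``not matching'' serial, where the World 3 discrepancy is already maximal, so that substituting any real paper can only \emph{decrease} that ballot's discrepancy. A counting step shows there is always room: writing $a,b,c$ for the numbers of one-to-one, duplicated and not-matching serials, the total paper count is at most $\numvoters=a+b+c$ while each duplicated serial carries at least two papers, which forces the number of surplus honest papers to be at most $c$.

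With the manifest fixed, every World 2 discrepancy is at most its World 3 counterpart---equal except on the re-homed not-matching slots, where it can only drop---and the totals and voter sets agree, so all four invariants of \autoref{subsec:security-verifiability} hold. Since the RLA samples digital records, which are identical in the two worlds, the sample spaces coincide and each sampled ballot corresponds directly. The conclusion then follows from monotonicity (Remark~\ref{rem:phantom-remark}, via \autoref{def:monotonic}) together with Remark~\ref{RL remark}, exactly as in the general argument: for any sample set and any wrong announced outcome, lowering discrepancies cannot turn ``reject'' into ``accept'', so if the RL functionality accepts in World 3 it also accepts in World 2, and $\mathcal{A}'$ wins the Intermediate World 2 Game with probability at least that of $\mathcal{A}$.

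I expect the main obstacle to be precisely this re-homing/counting step: one must verify that the conservative World 3 rules never let the adversary hide a genuine overstatement ``for free'', so that imposing the World 2 one-to-one structure cannot raise any discrepancy. The intuition is a no-free-lunch accounting---every duplicate used to mask a cheated vote consumes a paper that is then absent from some other serial, creating a phantom that the worst-case rule already charges at the maximum---and making this precise, while confirming that enough not-matching slots remain to absorb all surplus honest papers, is the delicate part of the proof.
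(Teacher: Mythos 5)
Your proposal is correct and takes essentially the same route as the paper's proof: the identical per-\Type{} construction (one-to-one matches kept, each duplicated serial resolved to its minimum-discrepancy paper, and not-matching serials filled by re-homing surplus papers with edited serial numbers or by invented papers, charged at no more than the worst case already assumed in World 3), concluded via monotonicity. The only difference is that you make explicit the counting argument (surplus papers $\le$ number of not-matching serials, using the check that papers do not outnumber BB records) that the paper leaves implicit.
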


    \begin{proof}

        Let's start with the setup in Intermediate World 3. In this world, each digital ballot is categorized as either `not matching', `one-to-one', or `duplicated'. We will prove that, in each of these categories, each ballot's discrepancy is maintained or decreased in the transition to Intermediate World 2. Figure~\ref{fig:proof-3-2} shows an example
        of the transition.

        For each digital ballot in the `duplicated' category, one of the paper ballots with a matching serial number that results in the minimum discrepancy is identified as the corresponding paper ballot for that digital ballot, and
        this correspondence is retained in Intermediate World 2.
        Remember that ``minimum discrepancy'' means ``the minimum discrepancy number, where 0 is taken to be a match, overstatements are positive, and understatements are negative.'' Because we were choosing the ``minimum discrepancy'' for our RLA accounting in Intermediate World 3, it's obvious that these one-to-one matches we just made in Intermediate World 2 produces exactly the same discrepancies for every digital ballot in the `duplicated' category.

        In the example in Figure~\ref{fig:proof-3-2}, we have a duplicate for $\textit{SN} = 02.$ The minimum
        discrepancy isn't the one that exactly matches, it's the 2-vote understatement (a ``-2 discrepancy''). So in
        Intermediate World 2 we
        choose to keep the connection to the ballot with $(\textit{SN=02}, W)$ (which therefore gets marked with a blue line and retained into Intermediate World 2). This therefore has the same discrepany in Intermediate worlds~2 and~3.

        For digital ballots in the `one-to-one' category in Intermediate World 3, there is no change in Intermediate
        World 2. In Figure~\ref{fig:proof-3-2}, Serial numbers 05, 11 and 15 are in this category.

        Finally, consider how to deal with digital ballots that are in the `not matching' category in Intermediate World
        3. We want to make a one-to-one correspondence with the ballot papers, of which we must (by assumption) have no more than we have electronic records. So we just make one up! We pick an arbitrary assignment of as-yet-not-matched digital records to as-yet-not-matched papers. Paper serial numbers are edited if the paper's
        serial number matched no digital record, or if it was a duplicate that was not selected as the minimum when dealing
        with the `duplicated' category.  Papers are added to make the number of digital and paper records equal, each
        with a serial number to match an as-yet-unmatched digital ballot.\footnote{This is just a proof technique to demonstrate that the treatment of ballots is conservative. There
        is no actual alteration of serial numbers on ballots, nor addition of new ballot papers during the protocol.}

        In Figure~\ref{fig:proof-3-2}, this arbitrary assignment is performed for $\textit{SN} = 12$ (which is matched to
        the unused duplicate of $\textit{SN}=02$), for $\textit{SN} = 18$ (which is matched to the previously-unmatched
        $\textit{SN} = 31$) and for $\textit{SN} = 21$ (which is matched to an invented ballot with an arbitrary vote selection). In each case, the serial number of the paper record is updated to match is corresponding digital
        one.

        We need to argue that the RLA discrepancies must be no greater for these ballots, but that's easy because when we had no match (in Intermediate World 3) we assumed the worst-case discrepancy. Now we have a specific match for each of them, which can't, by definition, be worse than the worst case. (edited)

        So, overall, this process produces a one-to-one match between digital and paper ballots, in which no digital record's discrepancy has increased from Intermediate World 3 to Intermediate World 2.

        be denoted by $\mathcal{B}$ and $\mathcal{B}'$, respectively. Let the set of paper ballots and those corresponding to $\mathcal{B}'$ be denoted by $\mathcal{P}$ and $\mathcal{P}'$, respectively. The discrepancy for each ballot in $\mathcal{B}\backslash\mathcal{B}'$ is set to the maximum possible value for that ballot.

        Now, let the number of voters in Intermediate World 2 be identical to that in Intermediate World 3. For each voter, the ballot selections on the paper and on the BB, and the serial number recorded on the BB in Intermediate World 2 are selected to be identical to those in Intermediate World 3. Additionally, the serial number on the paper ballot for each voter whose paper ballot in Intermediate World 3 is in $\mathcal{P}'$ would be the same  in Intermediate World 2. However, the serial numbers on the paper ballots for other voters in Intermediate World 2 would be any arbitrary permutation of the serial numbers recorded on the BB for $\mathcal{B}\backslash\mathcal{B}'$ in Intermediate World 3. Consequently, discrepancies for ballots in $\mathcal{B}'$ would be identical in both worlds. But the discrepancy for each ballot in $\mathcal{B}\backslash\mathcal{B}'$ in Intermediate World 2 would be equal to or less than that in Intermediate World 3. Therefore, the RLA risk in Intermediate World 2 would be equal to or greater than in Intermediate World 3. Figure \ref{fig:proof-3-2} demonstrates these setups.
        \begin{figure}
            \begin{center}

                \includegraphics[scale=0.35]{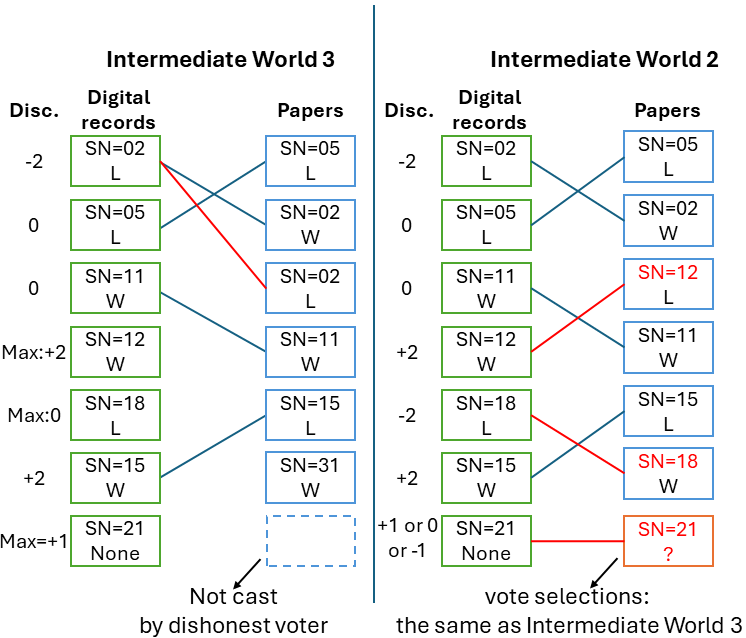}
                \caption{Building ballots in Intermediate World 2 based on those in Intermediate World 3.}
                \label{fig:proof-3-2}
            \end{center}
        \end{figure}

    \end{proof}

    \begin{prop}
        \label{prop:permutation}
        Given successful verification of the uniqueness of all serial numbers on the BB (i.e., Section \ref{sec:BB-public-ver}, BB transcript verification (public), item \ref{item:sn-unique}), for any adversary $\mathcal{A}$ who wins the Intermediate World 2 Game, then there exists an adversary $\mathcal{A}'$ who wins the Intermediate World 1 Game.
    \end{prop}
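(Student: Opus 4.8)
The plan is to exploit the fact that, once uniqueness of all serial numbers on the BB is verified and the set of serials printed on paper equals the set on the BB (both of which hold in Intermediate World~2), the serial-number matching is necessarily a \emph{bijection} between the digital and paper ballots---exactly like the identifier-based correspondence of Intermediate World~1. The malicious voting machine's only additional power in World~2 is to choose \emph{which} bijection this is, by printing a permutation of the BB serials onto the papers. I would argue that this extra freedom never helps the adversary, because the very same effective pairing between digital CVR contents and paper votes can already be realised in World~1 through the adversary's control of the digital records' contents.

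Concretely, given an adversary $\mathcal{A}$ winning the Intermediate World~2 Game, I would first read off the bijection $M$ that the verified, unique serial numbers induce between the $n$ digital ballots on the BB and the $n$ paper ballots in the box; each matched pair $(D,P)$ carries a discrepancy $\delta(D,P)$, and the World~2 RLA, sampling digital ballots uniformly, observes precisely the multiset $\{\delta(D,M(D))\}$. I would then construct $\mathcal{A}'$ for World~1 using the same honest voters with the same intended (paper) votes, the same number of voters, and the same set of paper votes, but assigning the content of each digital record so that the World~1 identifier-matching reproduces $M$: the digital record that World~1 pairs with a given paper is given the same content that $M$ paired with that paper in World~2. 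Since $\mathcal{A}'$ controls every digital record's content (the voting computer is corrupted), this assignment is always available, and the resulting digital ballots are simply a \emph{permutation} of those in World~2.

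With this construction the required invariants follow immediately: the reported total is unchanged, because a permutation of the digital CVRs leaves their homomorphic sum---and hence the apparent outcome, and the ``wrong outcome'' condition---fixed; the number of voters and the set of honest voters with their intended votes are identical by construction; and the multiset of matched $(\text{CVR},\text{paper})$ pairs, and therefore every discrepancy value, is preserved. This places us in the permutation case of the world-to-world invariant (the digital ballots of one world are a permutation of those of the other), so by Remark~\ref{RL remark} the risk-limit property is insensitive to the relabeling: uniform sampling over a permuted bijection yields an identical distribution of observed discrepancy sequences, hence an identical RLA accept probability. Thus whenever $\mathcal{A}$ makes the World~2 RLA accept a wrong outcome with probability non-negligibly above $\alpha$, $\mathcal{A}'$ does the same in World~1.

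The main obstacle I anticipate is justifying that the constructed World~1 scenario is \emph{legitimate}---that reproducing an arbitrary permutation of the pairing is consistent with the honest-voter constraints and with the trusted authority's checks (count equality and non-spoofed BB identities). The resolution is that honest voters in World~1 fix only the \emph{votes on their papers}, not the contents of the adversarially controlled digital records to which they are matched; so permuting the digital-side contents to reproduce $M$ never disturbs an honest voter's verified paper, and the authority's count-equality check is satisfied because $M$ is a bijection between equal-sized sets. I would also confirm that the verified uniqueness of serials genuinely forces $M$ to be one-to-one---a repeated serial would break the bijection and the whole argument---which is exactly the hypothesis drawn from BB transcript verification item~\ref{item:sn-unique}.
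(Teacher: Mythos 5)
Your proposal is correct and takes essentially the same route as the paper's proof: the paper likewise keeps every paper ballot fixed and assigns voter $i$ the digital ballot $B_{j}$ with $j=\pi^{-1}(i)$, so that World-1 identifier matching reproduces the serial-number bijection of World 2 and every discrepancy value is preserved. Your supporting observations---that verified uniqueness forces the matching to be a bijection, that permuting the digital records leaves the homomorphic tally and the uniform-sampling distribution unchanged, and that honest voters constrain only their papers, not the adversarially chosen digital contents---are exactly the (largely implicit) justifications behind the paper's shorter argument.
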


    \begin{proof}
        Let the permutations that is applied to serial numbers before printing them on the paper ballots in the Intermediate World 2 be denoted by $\pi$. Let the paper ballot and the digital ballot for the voter $i$ in the Intermediate World 2 be denoted by $P_i$ and $B_i$, respectively. Now assume the Intermediate World 1 is established as follows. Although the paper ballot that is printed for the voter $i$ is still $P_i$, the digital ballot that is recorded on the BB for the same voter would be $B_j$ with $j=\pi^{-1}(i)$. This way discrepancies of all ballots in both worlds would be identical. Figure \ref{fig:proof-2-1} demonstrates these setups.

        \begin{figure}
            \begin{center}

                \includegraphics[scale=0.35]{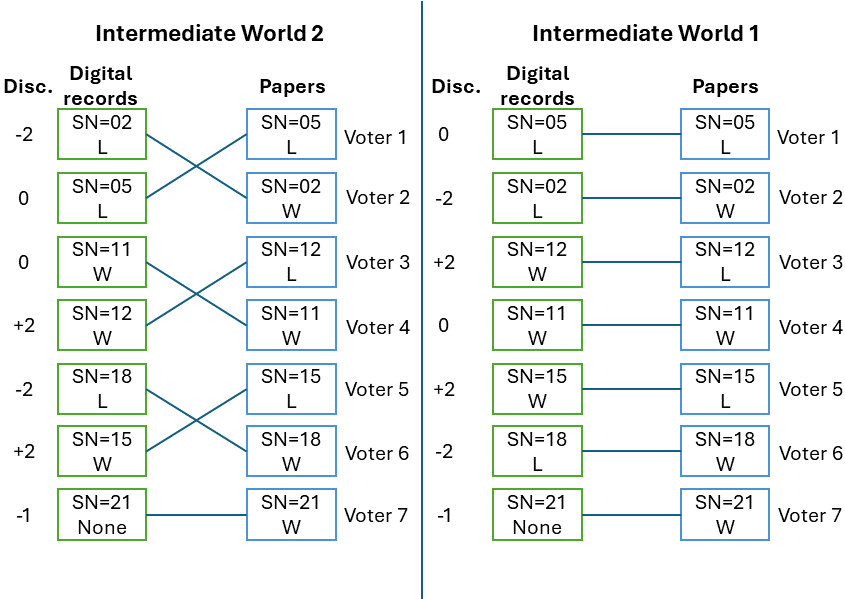}
                \caption{Building ballots in Intermediate World 2 based on those in Intermediate World 3.}
                \label{fig:proof-2-1}
            \end{center}
        \end{figure}

    \end{proof}

    \begin{prop}
        \label{prop:random-dig}
        For any adversary $\mathcal{A}$ who wins the Intermediate World 1 Game, then there exists an adversary $\mathcal{A}'$ who wins the Ideal World Game.
    \end{prop}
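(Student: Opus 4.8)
The plan is to exploit the fact that Intermediate World~1 and the Ideal World have \emph{identical} ballot configurations: the same digital records on the BB, the same paper ballots, the same honest/dishonest voter partition, and---crucially---the same precise one-to-one correspondence between each paper ballot and its digital record, keyed by the voter identifier and enforced by the trusted authority (who checks both inclusion and exclusion, so the two sets have equal size and matching identifiers). The \emph{only} difference between the two worlds is the direction in which the RLA draws its sample: the Ideal World samples uniformly from the paper records and follows the printed identifier to the matching digital record, whereas Intermediate World~1 samples uniformly from the digital records and follows the identifier to the matching paper ballot. Accordingly, the constructed adversary $\mathcal{A}'$ will simply reuse $\mathcal{A}$'s ballot construction verbatim, and the entire content of the proof is to show that reversing the sampling direction does not change the RLA's acceptance probability.

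First I would fix the bijection $\phi$ that the voter identifier induces between the set of paper ballots and the set of digital records; this is exactly the correspondence guaranteed in both worlds. The discrepancy of any audited unit is a function only of the pair $(\text{paper vote},\text{digital CVR})$, hence only of $\phi$ applied to whichever side was drawn. Then I would argue that, because $\phi$ is a bijection, drawing a digital record uniformly and mapping through $\phi^{-1}$ to its paper ballot yields exactly the uniform distribution over paper ballots, and the resulting sampled pairs are identically distributed to those obtained by drawing the paper ballot uniformly in the first place. Thus each sampled pair, and therefore each observed discrepancy, has the same distribution in the two worlds.

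Since the RL Functionality's accept/reject behaviour depends only on the multiset of observed discrepancies together with the fixed setup parameters (which are shared across the two worlds), identically distributed samples yield identically distributed RLA transcripts. This holds round by round even when the RLA adaptively enlarges its sample, because the relabelling by $\phi$ is consistent across all rounds and commutes with the (uniform, with- or without-replacement) sampling. Consequently the probability that the RLA accepts a wrong apparent outcome is the same in both worlds, while the apparent-versus-actual mismatch condition for winning is a property of the shared ballot configuration and so transfers unchanged. Therefore $\mathcal{A}'$ wins the Ideal World Game with exactly the probability that $\mathcal{A}$ wins the Intermediate World~1 Game.

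I expect the only real obstacle to be stating the sampling-invariance cleanly for an \emph{adaptive} RLA that may sample with or without replacement: the bijection-relabelling argument must be applied to the whole (possibly sequential) sampling process rather than to a single draw, so that the equality of acceptance probabilities is exact rather than merely per-round. Given the one-to-one correspondence this is routine, but it is the step that carries the actual work.
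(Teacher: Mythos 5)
Your proposal is correct and takes essentially the same approach as the paper: the paper's one-sentence proof also relies on the one-to-one correspondence between digital records and paper ballots to transport the uniform RLA sample from one side to the other with identical discrepancy values, hence identical RLA risk. Your elaboration of the bijection $\phi$, the induced equality of sample distributions, and the round-by-round handling of adaptive sampling is a faithful (and more careful) expansion of exactly that argument, not a different route.
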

    \begin{proof}
        Due to the one-to-one correspondence between digital records and paper ballots (resulting from inclusion and exclusion criteria), and the fact that RLA involves the random selection of ballots, any sequence of paper ballots in the Ideal World corresponds to a matching sequence of digital ballots in the Intermediate World 1 with the same discrepancy values and hence with the same RLA risk.
    \end{proof}

    \subsection{Privacy} \label{subsec:privacy-proof}

    This section sketches a proof of \autoref{prop:main-privacy}. We begin by restating the privacy definition BRPIV from~\cite{bernhard2015sok}. We then list two important technical assumptions, which summarise the privacy provided by components. We then restate the proposition and provide the proof.

    The setup assumes that all votes are opened for audit, which is the worst case. In practice, most typical runs of \protname{} would be unlikely to open very many, if any, of the votes, thus revealing much less information than is assumed here.

    \begin{definition}[BPRIV, Adapted from \cite{bernhard2015sok} Definition~7]
        Consider a voting scheme $\mathcal{V} = (\textsf{Setup, Vote, Valid, Publish, Tally, Verify})$ for a set $I$ of voter identities. The scheme has \emph{ballot privacy} if there exists an algorithm $\textsf{SimProof}$ such that no efficient adversary $\mathcal{A}$ can guess a randomly-chosen bit $\beta$ with non-negligible advantage after playing the game $\textsf{Exp}_{\mathcal{A},\mathcal{V}}^{\textsc{BPRIV},\beta}(\lambda)$
        defined below, where $\lambda$ is the security parameter.

        \begin{itemize}
            \item \textbf{Setup.} The challenger picks a $\beta\in\{0,1\}$ uniformly at random. He sets up two empty bulletin boards $BB_0$ and $BB_1$ and two empty ballot boxes $\text{Box}_0$ and $\text{Box}_1$, creates a public key pair $(pk,sk)$ and posts the public information $pk$ on both boards. $\adv$ is given access to either $BB_0$ and $\text{Box}_0$ if $\beta = 0$ or to $BB_1$ and $\text{Box}_1$ otherwise.
            \item \textbf{Voting.} $\adv$ may make two types of queries.
            \begin{itemize}
                \item ($\mathcal{O}voteLR$) These model honest voters. $\adv$ provides a voter identity $id$ and two votes $(\mathbf{b_0}, \mathbf{b_1})$. The challenger generates a random serial number and creates a digital and paper ballot using $\mathbf{b_0}$. Similarly, the challenger generates another random serial number and creates a digital and paper ballot using $\mathbf{b_1}$. The challenger posts the digital ballot created from $\mathbf{b_0}$ and $\mathbf{b_1}$ on $BB_0$ and $BB_1$, respectively. She also drops the paper ballots created from $\mathbf{b_0}$ and $\mathbf{b_1}$ in $\text{Box}_0$ and $\text{Box}_1$, respectively.
                \item ($\mathcal{O}cast$) These are queries made on behalf of corrupt parties. $\adv$ provides a vote $\mathbf{b}$. The challenger creates a random serial number and then creates a ballot and posts it on both $BB_0$ and $BB_1$. $\adv$ is provided with two identical corresponding paper ballots and places either both or neither into their respective ballot boxes.
            \end{itemize}
            \item \textbf{Tallying,} $\mathcal{O}\textsf{Tally}()$. If $\beta=0,$ return valid tally $(r,\Pi) \leftarrow \mathsf{Tally}(BB_0,sk)$. If $\beta=1$, set $(r,\Pi) \leftarrow \textsf{Tally}(BB_0, sk)$
            and return simulated tally $(r, \textsf{SimProof}(BB_1, r))$. This includes showing $\adv$ the stickers and paper ballots from $\text{Box}_0$.
        \end{itemize}
    \end{definition}

\subsubsection{Assumptions}
    \protname{} relies on two critical technical components: the ElectionGuard protocol and the Verificatum mixnet. Proving that either of these have their required privacy properties is outside the scope of this paper. Instead, we state what we assume from them.

   The first assumption is that the basic ElectionGuard encryption scheme, including its validity ZKPs, is non-malleable.
    This applies to the encryption of an individual ciphertext. We first recall the definition of non-malleability (NM-CPA).

    \begin{definition}[NM-CPA, from \cite{bernhard2015sok}]
        NM-CPA is defined by the following game:
        \begin{enumerate}
            \item The challenger generates keys $(sk,pk)$ and gives the public key $pk$ to the adversary.
            \item The adversary picks two messages $m_0$ and $m_1$ and hands them to the challenger, who chooses a random bit
            $\beta$ and returns an encryption $c^*$ of $m_{\beta}$.
            \item The adversary may submit a vector $\vec{c}$ of query ciphertexts. For each query ciphertext $\vec{c}_i$, if $\vec{c}_i = c^*$, then the challenger returns $\bot$ (reuse of the challenge ciphertext is disallowed), otherwise he returns $\textsf{Dec}(sk, \vec{c}_i)$.
            \item The adversary submits a guess $\beta'$ for $\beta$.
        \end{enumerate}
        The adversary's advantage is $|\Pr(\beta' = \beta) - 1/2|$. We say the encryption scheme is NM-CPA secure if this advantage is negligible in the security parameter.
    \end{definition}

    This formalises the idea that the adversary cannot gain information about honest voters from reusing (possibly-manipulated versions of) their votes and thus inferring an honest vote from the tally. Together with the removal of duplicate ciphertexts, it implies that the votes of corrupted voters cannot depend on the votes of honest ones.

    \begin{assumption} \label{assumption:EG-NonMalleable}
    ElectionGuard's encryption scheme, including the validity ZKPs, is non-malleable (NM-CPA).
    \end{assumption}
    As far as we know there is no proof of this, but non-malleability (NM-CPA) is proven for Helios's almost-identical encryption-with-ZKP scheme in~\cite{bernhard2012not}.

    \begin{assumption}[\cite{terelius2010proofs}] \label{assumption:mix-zk}
    The Verificatum proof is an honest verifier zero knowledge proof of knowledge of the relation $\mathcal{R}_{\phi_{pk}}
    \lor \mathcal{R}_{\phi_{com}}$, where $\mathcal{R}_{\phi_{pk}}$ is the relation of valid shuffles and $\mathcal{R}_{\phi_{com}}$ is the relation that exposes the private commitment information. (See \cite{terelius2010proofs} for a precise statement.)
    \end{assumption}
    Informally, this proof says ``either I have performed a valid shuffle, or I have broken the commitment scheme.'' Assuming that the commitment scheme is sound, it implies a valid shuffle.

    \begin{assumption} \label{assumption:mix-physical}
        $\adv$ cannot distinguish the following two situations:
        \begin{itemize}
            \item The $\beta=0$ case: ballots arrive in signed envelopes and are physically shuffled, then the shuffled ballots are shown to the adversary.
            \item A different set of ballot papers arrive in the same signed envelopes, then a simulated shuffle is performed, with the same ballots from the $\beta=0$ case shown to the adversary.
        \end{itemize}
    \end{assumption}

    The proof of privacy for \protname{} employs a construction of \textsf{SimProof}  that closely follows the proof in \cite{bernhard2015sok} for Helios, and adds a simulation of the shuffle proof guaranteed by Assumption~\ref{assumption:mix-zk}.

\subsubsection{The proof}
    We are now ready to prove \autoref{prop:main-privacy}, which is restated here.

    \textit{
        \protname{} in the \nodelaynostuff\ model provides ballot privacy according to BPRIV~\cite{bernhard2015sok}, assuming that the underlying ElectionGuard ciphertexts satisfy NM-CPA and the mixing proof is zero knowledge.}

    The definitions of the voting scheme $\mathcal{V} = (\textsf{Setup, Vote, Valid, Publish, Tally, Verify})$  are as follows.
    \textsf{Setup} Sets up the election public key and PKI for signatures.

    \textsf{Vote$(id,v_0)$} Uses the election public key to encrypt the vote along with a randomly generated SN; signs it with the key associated with id (as specified in Voting step~\ref{step:make-vote}).

    \textsf{Valid$(BB_\beta, b_\beta)$} Verifies the vote validity ZKPs and digital signatures for each vote, then removes any
    vote that shares a ciphertext with any other (as specified in Digital Processing Step~\ref{step1-dig}).

    \textsf{Publish$(BB_\beta)$} The challenger publishes the encrypted input votes (from either $BB_0$ or $BB_1$), and the corresponding ciphertexts output from the mix (under whatever random permutation $\pi_0$ or $\pi_1$ respectively the mix generates). Also the aggregated encrypted input votes (though this can be derived).

    \textsf{Tally$(BB_0, sk)$} Returns the decrypted ballots in order from $BB$, accompanied by $\MixZKP{}{}$ and the decryption ZKP $\DecryptZKP{}{}$ for each vote.

    According to BPRIV, if $\beta=0$ the challenger directly posts the honestly-generated proof transcript $\mathsf{Tally(BB_0, sk)}$; if $\beta=1$ it instead posts the result $r$ from $\mathsf{Tally(BB_0, sk)}$ (which in our protocol consists of the aggregated tally and the permuted, decrypted ciphertexts) along with a simulation $\mathsf{Simproof(BB_1, r)}$ of the rest of the transcript. We need to specify $\mathsf{Simproof}$ so that this simulated transcript is indistinguishable from the honestly-generated transcript from the $\beta=0$ case.

    $\mathsf{Simproof(BB_1,r)}$ takes the true aggregated vote ciphertext from $BB_1$, the true mixing proof (input ciphertexts, output ciphertexts and $\MixZKP{}{}$) from $BB_1$, and the decrypted votes from $r$, and uses the ZK simulator of the decryption ZKP to simulate both the decryptions of the aggregate (from Step~\ref{step:digitalDecryptedTally}) and the individual vote decryptions (from Step~\ref{step:lastStepOfElectronicProcess}).

    \textbf{RLA.} The adversary observes the RLA procedure run on paper ballots corresponding to the left board.

    We model an adversary who colludes with sub-threshold number of guardians as having no knowledge of the private key.

    We can now prove privacy in the interactive honest-verifier zero knowledge setting---the extension to the noninteractive setting can follow standard techniques used in~\cite{bernhard2015sok}.
    \paragraph{Proof of \autoref{prop:main-privacy}}
    \begin{proof}
        We need to prove that using SimProof for simulating the tally above (in the case $\beta=1$) is indistinguishable from the real transcript (for the case $\beta=0$). The proof is very similar to the proof of BPRIV for Helios in \cite{bernhard2015sok}, except that the initial step simulates the mixing ZKP.

    \paragraph{[Game $G_{-1}$]} is the $\beta=0$ case, where the challenger runs an honest decryption and tally of $BB_0$ and opens ballot box $\text{Box}_0.$

    \paragraph{[Game $G_0$]} is the same as $G_{-1}$ except that all the ZKPs and the physical shuffle are simulated. The permuted list of decrypted output ballots, and the paper ballots shown to the adversary after shuffling, remain the same.
    \begin{itemize}
        \item Use Assumption~\ref{assumption:mix-zk} to produce a simulated shuffle ZKP indistinguishable from the real one.
        \item Use the Zero Knowledge property of ElectionGuard's decryption ZKPs to simulate them for all mixed votes.
        \item Use Assumption~\ref{assumption:mix-physical} to say that the physical shuffle can be ``simulated'' so that $\adv$ cannot distinguish the real shuffle from the one in which votes were substituted.
        \end{itemize}

    By the Zero knowledge property, this is indistinguishable from Game $G_{-1}$.

    The proof is now very similar to the proof of the equivalent step in~\cite{bernhard2015sok}.
       Let $n$ be the number of votes on $BB_0$. For $i=1 \ldots n$, define a series of games as follows.

        \paragraph{[Game $G_i$]} is derived from $G_0$ by replacing the first $i$ ballots on $BB_0$ with the corresponding ballots from $BB_1$.

        We now need to prove that, for $i=1,\ldots,n$ game $G_{i-1}$ is indistinguishable from game $G_i$, from which it differs only in the substitution of ballot $b_i$ on the BB.

        Consider what the adversary could have used for $b_i$ on board $BB_1$. If it is identical to $b_i$ from $BB_0$ then distinguishing is impossible. If it is different, then either the adversary re-uses  a ciphertext already on the board (in which case, the removal of duplicate ciphertexts corresponds to the return of $\bot$ in the NM-CPA game, no information is returned, and distinguishing is impossible), or the game is run and distinguishing the two boards---which differ only in one ballot---is equivalent to distinguishing those two ballots in the NM-CPA game. None of these options give $\adv$ a better advantage than distinguishing the two ballots in the NM-CPA game, which we assume to be negligible (Assumption~\ref{assumption:EG-NonMalleable}).

    We rely on Assumption~\ref{assumption:mix-physical} to argue that the paper evidence does not help the adversary to distinguish these two games.

    The last of these games ($G_n$) is exactly the $\beta=1$ case, so we have shown that it is indistinguishable from the $\beta=0$ game.
    \end{proof}

    \section{Using \protname{} with other RLA styles and avoiding the publication of \protname{} sub-tallies}
\label{App:otherRLAStyles}

\subsection{VAULT-\protname{}: incorporating \protname{} in to an RLA that uses VAULT for all ballots} \label{subsec:MergeWithVault}

    The VAULT RLA scheme~\cite{benaloh2019vault} protects against pattern-based coercion attacks (often called ``Italian attacks'') by hiding individual ballots unless they are selected for RLA. This protects against both the problem of small sub-tallies for \protname{}\ ballots and also the individual pattern-based coercion attacks on all ballots
    (at least, all ballots that are not selected for audit).

    There are two main steps to incorporate \protname{} into VAULT. In the first step, all the votes are tabulated on a VAULT BB. This step is simple: the \protname\ ballots that are \emph{output} by the mix are in exactly the right form for VAULT. The ordinary ballots from the \localVotingCenter{}\ can simply be appended. The tallies are then computed over the whole set of votes.

    In the second step, the RLA is conducted using the ballot-level comparison method, with samples taken at random from the VAULT-tabulated ballots. Discrepancies are calculated as follows:

    \begin{itemize}
        \item if the ciphertext is a \protname{}\ ballot, it should be treated as described in \autoref{discrepancy-det}, using a ballot paper of matching serial number if possible, and a worst-case assumption if there is no ballot paper,
        \item if the ciphertext is an ordinary ballot, the paper ballot should be found and the discrepancy calculated as in a standard VAULT RLA.\footnote{The only technical restriction, compared with standard VAULT, is that for the \protname{}\ ballots, the ciphertext must be decrypted rather than opened, because the authorities do not know the random factors used to generate it. The ordinary votes need to use the same encryption scheme as the \protname{}\ ones, so that they can be aggregated together---they may be either decrypted like the \protname{}\ ballots or opened (from the randomness used to generate them) like standard VAULT ballots.}
    \end{itemize}
    This method allows for a very efficient audit (ballot-level comparison for all ballots) with good privacy properties, even if the \protname{}\ ballots are only a small set.

    This would also work well if, for some reasons, the local electoral authority included some but not all ballots into a VAULT audit.

    \subsection{Sub-\protname: hiding \protname{} ballots in larger batches} \label{subsec:SubMerge}

    In this variant, serial numbers are not used, because paper ballots are not individually matched to their corresponding digital record. If it is known in advance that a certain ballot type will be processed using Sub-\protname{}, the serial numbers can be omitted from ballots of that type. This has a significant advantage for defending against a coercer who colludes with people at the \localVotingCenter{}, because even a voter who voluntarily reports their serial number (for example, by taking a photo) cannot be linked to their ballot.
    However, their anonymity set, among ballots of the same type, may still be small against a coercer who observes ballots in person at the \localVotingCenter.

    If serial numbers are omitted, it is probably convenient to add a ballot type indicator on the paper ballot---this information is present anyway, and having it printed explicitly is convenient for processing at the \localVotingCenter.

    Call the (small) set of \protname{}\ ballots on the BB $M$. When $|M|$ is small, mixing does not have much benefit.  We could use the ballots output from the mix, but it is just as valid to skip mixing  and use the aggregate of $M$ from \emph{before} the mixing step that is published in Step~\ref{step:digitalEncryptedTally} of the digital path.

    The main idea is to make a larger batch that includes all the \protname{}\ ballots and enough ordinary ballots to constitute a reasonable anonymity set of ballots of the same type.  We tally the batch electronically, then use one of several possible methods for incorporating  it into an existing audit. Table \ref{Tab:privacysubmerge} presents an overview of the access levels granted to various stakeholders in Sub-\protname{}. Plaintext sub-tallies from \protname{} ballots are not calculated.\footnote{They can of course be inferred by inspecting the arriving ballot papers, but this is not part of the privacy attacker model for this section.}

    \begin{table}
\begin{center}
  \caption{Data accessible to various entities in Sub-\protname\ where SN is replaced with an explicit ballot type identifier.}
  \label{Tab:privacysubmerge}

  \begin{tabular}{||l||c|c|c|c||}
      \hline
      \hline
      Entity&Voter&Vote&\protname{}&Ballot\\
        &ID& &Tally&type\\

      \hline
      General public&$\checkmark$&$\times$&$\times$&$\checkmark$\\
      BMD machine&$\checkmark$&$\checkmark$&$\checkmark$&$\checkmark$\\
      Coercer&$\checkmark$&$\times$&$\times$&$\checkmark$\\
      RLA observers&$\times$&$\checkmark$&$\times$&$\checkmark$\\
      Postal worker&$\checkmark$&$\times$&$\times$&$\times$\\
      \hline\hline
  \end{tabular}
    \end{center}
 \end{table}

    \begin{enumerate}
    \item Gather a collection of local ballots $L$ from those available in the \localVotingCenter{}. They may have been cast in person, or they may be other (non-\protname) absentee ballots. $L$ should be chosen so that, when combined with the \protname{}\ ballots, the anonymity set is large enough for the tallies to be published. Probably $|L \cup M| \approx 30.$

    We assume that ballots in $L$ are already disassociated from the voter's name.

    \item  Encrypt the ballots in $L$ using the encryption scheme from \autoref{subsec:notationAndBuildingBlocks}, including validity proofs. Post them on the BB.
        \item Use the homomorphic property to compute the combined aggregate of $L \cup M$. Decrypt it and publish the ciphertext with proof of correct decryption on the BB. This gives us, on the BB:
         $$\enc(\mathbf{Tally_{L \cup M}}), \mathbf{Tally_{L \cup M}}, \DecryptZKP{\mathbf{Tally_{L \cup M}}}{\enc(\mathbf {Tally_{L \cup M}}})$$
    \end{enumerate}

    The resulting data can be incorporated into a ballot-level comparison RLA by
    adopting the one-audit approach (see \autoref{subsec:oneAudit-submerge}), or into a batch-level comparison
    RLA by simply tallying the batch manually (see below).

    \subsubsection{Batch-\protname: batch-level comparison audits}
    \label{subsec:batch}
    If the \localVotingCenter{} is already using  batch-level comparison audits, these batches can easily be
    incorporated. We can  define the batches according to some physical convenience function---the corresponding electronic records should be easy to identify and need not be together on the BB. For instance, if batches are created based on the envelope arrival date, once a validly signed envelope is received, we place its ballot in the appropriate batch box and record the batch number on the BB.

    This would make sense if the \localVotingCenter{}\ already used batch-level comparison audits for their RLAs. The next audit steps would be as follows.

\begin{enumerate}
    \item[4.] Manually tally the combined batch comprising $L$ and the \protname{}\ ballots.
        \item[5.] Compute the overall discrepancy $D$ between the electronic and manual tallies.
\end{enumerate}

    We now have the discrepancy between a plaintext electronic tally and a manual tally of paper ballots, each comprising both \protname{}\ and ordinary ballots. This should be incorporated into the RLA statistics according to the rules for
    batch-level comparison RLAs, which should automatically deal appropriately with non-arrived \protname{} ballots.

    \subsubsection{Ballot-level comparison RLA using ONEAudit} \label{subsec:oneAudit-submerge}
    If the \localVotingCenter{} is already using ballot-level comparison audits, sub-\protname{} batches can be efficiently
    incorporated using ONEAudit \cite{stark2023overstatement}. In this approach, discrepancies are computed by comparing ballots to their \emph{overstatement net equivalent} CVRs (as defined below), rather than to individual CVRs.

    \begin{definition}
        Two sets with the same number  of CVRs  are \emph{overstatement net equivalent}
        if they produce the same totals.
    \end{definition}

    In this case, if a \protname{} ballot is selected for audit, but no corresponding paper ballot has arrived, it is important to ensure the right worst-case assumption in the case that overstatements of +2 or more are possible. The next steps in the audit would be as follows.
    \begin{enumerate}
        \item[4.] Count the total number of \protname{} electronic ballots in the batch, and subtract the number of \protname{} ballots that have arrived, then explicitly add that number of ``not arrived'' zombie ballots to the catalogue of ballot papers.
        \item[5.] The RLA then samples randomly from the catalogue of ballots, which includes:
        \begin{itemize}
            \item batched \protname{} paper ballots that have arrived (including any incorporated from the \localVotingCenter),
            \item zombies representing batched \protname{} ballots that have not arrived,
            \item ordinary paper ballots from the \localVotingCenter{}, if there are any, (except the ones included in Batch-\protname).
        \end{itemize}
        It could also include any \protname{} that were not included in sub-\protname{} batches, which are sampled from the output of the mix as described in \autoref{subsec:RLA}---the different approaches can determined separately for different types of ballots.
    \end{enumerate}

    If an ordinary paper ballot from the \localVotingCenter{} (not in $L$) is sampled, it is dealt with however the \localVotingCenter{} already deals with its ballot-comparison RLAs; if a (non-batched) \protname{} ballot is sampled, it is matched according to serial number and audited as described in \autoref{subsec:RLA}.
        If a ballot in $L \cup M$ is sampled, use the discrepancy between its paper ballot and the \emph{Overstatement Net Equivalent} CVR for $L \cup M$. If the ballot has not arrived, its ``paper ballot'' will be a zombie---account for it in the RLA as described in \autoref{subsec:RLAsAndOneAudit}.

\subsubsection{Batching strategies and privacy implications}
    It may even be possible to determine in advance which paper ballots will be in which batch---for example, if so few \protname{}\ ballots are sent to one \localVotingCenter{}\ that they are all treated as a single batch. Alternatively,  batches could be assigned as the ballot papers arrive, based on their sticker.

    Note that it does not matter whether the attacker knows, or can change, which ballots are contained in the same batch---this worst-case attack assumption is already part of the attacker model of batch RLAs. As long as the attacker does not know which batches will be sampled for audit, the risk limit is met.

    Table \ref{Tab:privacy-batch} summarizes the access levels
granted to various stakeholders in Batch-\protname{}.

\begin{table}
\begin{center}
  \caption{Data accessible in Batch-\protname.}
  \label{Tab:privacy-batch}

\begin{tabular}{||l||c|c|c|c|c||}
 \hline
 \hline

  	Entity&ID&Batch&ID\&&Vote&Tally\\
     &&No.&Batch No.&&\\

 \hline
General public&$\checkmark$&$\checkmark$&$\checkmark$&$\times$&$\checkmark^{\dagger}$\\

BMD machine&$\checkmark$&$\checkmark$&$\checkmark$&$\checkmark$&$\checkmark$\\
Coercer&$\checkmark$&$\checkmark$&$\checkmark$&$\times$&$\checkmark^{\dagger}$\\
RLA observers&$\times$&$\checkmark$&$\times$&$\checkmark$&$\checkmark^{\dagger}$\\
Postal worker&$\checkmark$&$\checkmark$&$\checkmark$&$\times$&$\checkmark^{\dagger}$\\

\hline\hline
  \end{tabular}
{\small \item $^\dagger$: visible on the BB}
  \end{center}
 \end{table}

In the context of this table, we operate under the assumption that the batch number assigned to each participating voter is publicly disclosed on the BB. As indicated in the table, apart from the BMD machine, exclusive access to each ballot's vote content is granted only to the observers at the \voteCollectionCenter. Nevertheless, the observers possess solely the batch number without any additional information to establish a connection between the vote contents and the respective voter's identity. Consequently, with a sufficiently large batch size, the privacy of the voter is preserved.
\end{document}